\newlist{dsitemize}{itemize}{1}
\setlist[dsitemize,1]{label=--,leftmargin=6mm}
\tikzset{
	vertex/.style={circle,draw,inner sep=01pt,minimum size=1.3em},
	vertex2/.style={circle,draw,inner sep=01pt,minimum size=1.5em},	
	vertex2g/.style={color=gray,circle,draw,inner sep=01pt,minimum size=1.5em},		
	vertexb/.style={draw,inner sep=01pt,minimum size=1.3em},	
	dedge/.style={->,> = latex', font=\footnotesize},
}
\newcommand{\tg}{{G}}
\newcommand{\tge}{{E}}
\newcommand{\probaenum}{\textsc{MceaEnum}}
\newcommand{\probas}{\textsc{Mcea}}
\newcommand{\probaenumsp}{$($$c$$>$$0$$)$-\textsc{MceaEnum}}
\newcommand{\probacount}{\textsc{\#Mcea}}
\newcommand{\tpeac}{\textsc{\#Eap}}
\newcommand{\tpfc}{\textsc{\#Fp}}
\newcommand{\probbenum}{\textsc{McfEnum}}
\newcommand{\probbs}{\textsc{Mcf}}
\newcommand{\probbenumsp}{$($$c$$>$$0$$)$-\textsc{McfEnum}}
\newcommand{\probbcount}{\textsc{\#Mcf}}
\newcommand{\enump}{$\mathcal{E}$}
\newcommand{\delayp}{\textbf{DelayP}}
\newcommand{\psdelayp}{\textbf{PSDelayP}}
\newcommand{\spcomplete}{\textbf{\#P}-complete}
\newcommand{\spcompleteness}{\textbf{\#P}-completeness}
\newcommand{\spc}{\textbf{\#P}}
\newcommand{\prefixp}{prefix-path}
\newcommand{\suffixp}{suffix-path}
\newtheorem{definition}{Definition}
\newtheorem{corollary}{Corollary}
\newtheorem{observation}{Observation}
\newtheorem{lemma}{Lemma}
\newtheorem{theorem}{Theorem}
\begin{document}
\title{On the Enumeration and Counting of Bicriteria Temporal Paths\footnote{A preliminary conference version of this work appeared in~\cite{mutzel2019enumeration}.}}
\author{Petra Mutzel}
\author{Lutz Oettershagen}
\affil{Institute for Computer Science, University of Bonn}
\affil[ ]{\texttt {\{petra.mutzel,lutz.oettershagen\}@cs.uni-bonn.de}}
\renewcommand*{\thefootnote}{\fnsymbol{footnote}}
\maketitle
\setcounter{footnote}{0}
\renewcommand*{\thefootnote}{\arabic{footnote}}
\begin{abstract}
	We discuss the complexity of path enumeration and counting in weigh\-ted temporal graphs.
	In a weighted temporal graph, each edge has an availability time, a traversal time and some real cost. 
	We introduce two bicriteria temporal min-cost path problems in which we are interested in the set of all efficient paths with low cost and short duration or early arrival time, respectively.
	However, the number of efficient paths can be exponential in the size of the input.
    For the case of strictly positive edge costs, we are able to provide algorithms that enumerate the set of efficient paths with polynomial time delay and polynomial space.
    If we are only interested in the set of Pareto-optimal solutions and not in the paths themselves, then these can be determined in polynomial time if all edge costs are nonnegative. 
	In addition, for each Pareto-optimal solution, we are able to find an efficient path in polynomial time.
	On the negative side, we prove that counting the number of efficient paths is \spcomplete, even in the non-weighted single criterion case.
\end{abstract}
\section{Introduction}
A weighted temporal graph $G=(V,E)$ consists of a set of vertices and a set of temporal edges.
Each temporal edge $e\in E$ is associated with an edge cost and is only available (for departure) at a specific integral point in time. 
Traversing an edge takes a specified amount of traversal time. 
We can imagine a temporal graph as a sequence of $T\in \mathbb{N}$ graphs $G_1,G_2,\ldots,G_T$ sharing the common set of vertices $V$, and each graph $G_i$ has its own set of edges $E_i$.
Therefore, $G$ can change its structure over a finite sequence of integral time steps. 

Given a directed weighted temporal graph $G=(V,E)$, a source $s\in V$ and a target $z$\footnote{We use $t$ to denote \emph{time steps} and $z$ for the target vertex.}$\in V$, our goal is to find \emph{earliest arrival} or \emph{fastest} $(s,z)$-paths with \emph{minimal costs}.
A motivation can be found in typical queries in (public) transportation networks.
Here, each vertex represents a bus stop, metro stop or a transfer point and each edge a connection between two such points.
In this model, the availability time of an edge is the departure time of the bus or metro, 
the traversal time is the time a vehicle takes between the two transfer points, 
and the edge cost represents the ticket price. 
Two natural occurring problems are the following: (1) Minimize the costs and the arrival times, or (2) minimize the costs and the total travel time. %

In general, there is no path that minimizes both objectives simultaneously, and therefore we are interested in the set of all \emph{efficient} paths. 
A path is called {efficient} if there is no other path that is strictly better in one of the criteria and at least as good concerning both criteria. In other words, a path is efficient iff its cost vector is Pareto-optimal.

\probbenum{} and \probaenum{} denote the enumeration problems, in which the task is to enumerate exactly all efficient paths w.r.t.\ cost and duration or cost and arrival time, respectively.
We show, that there can be an exponential number of efficient paths. So one cannot expect to find polynomial time algorithms for these problems.
However, 
Johnson, Yannakakis, and Papadimitriou~\cite{JOHNSON1988119} have defined complexity classes for enumeration problems, where the time complexity is expressed not only in terms of the input size but also in the output size. We use the output complexity model to analyze the proposed enumeration problems and show that they belong to the class of polynomial time delay with polynomial space (\psdelayp{}).
If we are only interested in the sets of Pareto-optimal solutions and not the paths themselves, we show that
these problems can be solved in polynomial time for nonnegative edge weights. In these cases, we can also provide an associated path with each solution. 
On the negative side, we show that counting the number of efficient paths is not easier than their enumeration, i.e., determining the number of efficient paths is hard.
\\\\\textbf{Contribution}~--
In this paper we show the following: %
\begin{enumerate}%
	\item \probbenum{} and \probaenum{} are in \psdelayp{} for weighted temporal graphs with strictly positive edge costs. %
	\item 	In case of nonnegative edge costs, finding the Pareto-optimal set of cost vectors is possible in polynomial time (thus the set of Pareto-optimal solutions is polynomially bounded in the size of the input), and
	for each Pareto-optimal solution we can find an efficient path in polynomial time. %
	\item The decision versions that ask if a path is efficient, are in \textbf{P}. 
	Deciding if there exists an efficient path with given cost and duration or arrival time is possible in polynomial time. %
	\item The counting versions are \spcomplete, even for the single criterion unweighted case.
\end{enumerate}
Our paper has the following structure. In the remainder of this section we discuss the related work. In \Cref{section:preliminaries} we provide all necessary preliminaries. Next, in \Cref{section:structuralresults} structural results are presented. These are the foundation of the algorithms for \probbenum{} and \probaenum{} in the following \Cref{section:algprobb,section:algproba}, respectively. 
\Cref{section:counting} discusses the complexity of counting efficient paths.
Finally, in \Cref{section:conclusion}  conclusions are drawn and we state further open problems.
\\\\\textbf{Related Work}~--
Temporal graphs and related problems are discussed in several recent works. 
For a general overview there are a number of nice introductions, e.g., 
\cite{casteigts2012time,holme2015modern,holme2012temporal,kostakos2009temporal}.
An early work on graphs with time-depended travel times is from Cooke and Halsey~\cite{cooke1966shortest}, in which they provide a generalized formulation for the Bellman iteration scheme~\cite{bellman1958routing} in order to support varying travel times.
Kempe et al.~\cite{kempe2002connectivity} analyze connectivity problems with respect to temporal paths.
Xuan et al.~\cite{Xuan2003} discuss communication in dynamic and unstable networks. They introduce algorithms for finding a fastest path, an earliest arrival path and a path that uses the least number of edges. 
Wu et al.~\cite{wu2014path} further discuss the fastest, shortest and earliest arrival path problems in temporal graphs and introduce the latest-departure path problem.
Their algorithm for calculating an earliest arrival $(s,v)$-path has time complexity $\mathcal{O}(n+m)$ and  space complexity $\mathcal{O}(n)$, where $n$ denotes the number of vertices and $m$ the number of edges  in the given temporal graph. 
Furthermore, they present an algorithm that finds a fastest $(s,v)$-path in $\mathcal{O}(n+m\log c)$ time and $\mathcal{O}(\min\{n\cdot\mathcal{S},n+m\})$ space. 
Here, $\mathcal{S}$ is the number of distinct availability times of edges leaving vertex $s$, and $c$ the minimum of $\mathcal{S}$ and the maximal 
in-degree over all vertices of $G$. 
The algorithm uses a label setting approach to find a fastest $(s,v)$-path for each $v\in V$.
Both,~\cite{Xuan2003} and~\cite{wu2014path}, consider only non-weighted temporal graphs.

Hansen~\cite{HansenBSP} introduces bicriteria path problems in static graphs, %
and provides an example for a family of graphs for which the number of efficient paths grows exponentially with the number of vertices.
Meggido shows that deciding if there is an path that respects an upper bound on both objective functions is \textbf{NP}-complete (Meggido 1977, private communication with Garey and Johnson~\cite{gareyjohnson}).
Martins~\cite{martins1984multicriteria} presents a label setting algorithm based on the well known Dijkstra algorithm for the bicriteria shortest path problem, that finds the set of all efficient $(s,v)$-paths for all $v\in V$.
Ehrgott and Gandibleux~\cite{ehrgott2000survey} provide a good overview of the work on bi- and multicriteria shortest path problems.
Hamacher et al.~\cite{hamacher2006algorithms} propose an algorithm for the bicriteria time-dependent shortest path problem
in networks in which edges have time dependent costs and traversal times.
The traversal time of an edge is given as a function of the time upon entering the edge. 
Moreover, each edge has a two-dimensional time dependent cost vector. Waiting at a vertex may be penalized by additional bicriteria time dependent costs. 
They propose a label setting algorithm that starts from the target vertex and finds the set of all efficient paths to each possible start vertex.
Disser et al.~\cite{disser2008multi} discuss a practical work on the multi-criteria shortest path problem in time-dependent train networks.
To this end, they introduce foot-paths and special transfer rules to model realistic train timetables. 

We are not aware of any work discussing the enumeration or counting of the set of efficient paths in weighted temporal graphs.
\section{Preliminaries}\label{section:preliminaries}
A \emph{static, directed} \emph{graph} $G=(V, E)$ consists of a finite set of vertices $V$ and finite set $E\subseteq \{(u,v)\in V\times V \mid u\neq v\}$ of directed edges.
A weighted temporal graph $\tg=(V,\tge)$ consists of a set $V$ of $n\in \mathbb{N}$ vertices and a set $E$ of $m\in\mathbb{N}$ weighted and directed temporal edges. 
A weighted and directed temporal edge $e=(u,v,t,\lambda,c)\in\tge$ consists of the starting vertex $u\in V$, the end vertex $v\in V$ (with $u\neq v$), 
 availability time $t\in\mathbb{N}$, 
traversal time $\lambda\in\mathbb{N}$ and cost $c\in\mathbb{R}_{\geq 0}$. %
Each edge $e=(u,v,t,\lambda,c)\in\tge$ is only available for entering at its availability time $t$ and traversing $e$ takes $\lambda$ time.
For $T\coloneqq\max\{t\mid (u,v,t,\lambda,c)\in\tge\}$, we can view $\tg$ as a finite sequence $G_1,G_2,\ldots,G_{T}$ of static graphs over the common set of vertices $V$. Each $G_i$ has its own set of edges $E_i\coloneqq\{(u,v,\lambda,c)\mid e=(u,v,i,\lambda,c)\in\tge\}$. \Cref{fig:example_tg} shows an example.
We denote the set of incoming (outgoing) temporal edges of a vertex $v\in V$ by $\delta^-(v)$ ($\delta^+(v)$). 
Note that in general for temporal graphs the number of edges is not bounded by a function in the number of vertices, i.e. we may have arbitrarily more temporal edges than vertices.
In the following, we use a \emph{stream representation} of temporal graphs. 
A temporal graph is given as a sequence of its $m$ edges, which is ordered by the availability time of the edges in increasing order with ties being broken arbitrarily.
\subsection{Temporal Path Problems}\label{subsection:tgpath}
A \emph{temporal} $(u,v)$-\emph{walk} $P_{u,v}$ is a sequence $(e_1,\ldots,e_i=(v_i,v_{i+1},t_i,\lambda_i,c_i),\ldots, e_k)$ of edges with $e_i\in \tge$ for $1\leq i\leq k$, and with $v_1=u$, $v_{k+1}=v$ and $t_i+\lambda_i\leq t_{i+1}$ for $1\leq i <k$.
If a temporal $(u,v)$-walk visits each $w\in V$ at most once, it is \emph{simple} and we call it $(u,v)$-\emph{path}.
We denote by $s(P_{u,v}) \coloneqq t_1$ the \emph{starting time}, and by $a(P_{u,v})\coloneqq t_k+\lambda_k$ the \emph{arrival time} of $P_{u,v}$. 
Furthermore, we define the \emph{duration} as $d(P_{u,v})\coloneqq a(P_{u,v})-s(P_{u,v})$. 
A path $P_{u,v}$ is \emph{faster} than a path $Q_{u,v}$ if $d(P_{u,v})<d(Q_{u,v})$.
The \emph{cost} of a path $P=(e_1,\ldots, e_k)$ is the sum of the edge cost, i.e. $c(P)\coloneqq \sum_{i=1}^{k}c_i$. 
Finally, for a path $P=(e_1,\ldots,e_i,\ldots,e_k)$, we call $(e_1,\ldots,e_i)$ \emph{\prefixp{}} and $(e_i,\ldots,e_k)$ \emph{\suffixp{}} of $P$. %
In \Cref{fig:example_tg} the $(s,z)$-path $((s,b,1,1,2),(b,z,2,1,1))$ has arrival time $3$, duration $2$ and cost $3$. The $(s,z)$-path $((s,z,3,1,3))$ also has cost $3$, but it has a later arrival time of $4$ and is faster with a duration of only $1$.
\begin{figure}
\centering
	\begin{tikzpicture}
	\node at (-1.5,0.75) {$G:$};
	\node at (1.5,.75) {$G_1:$};
	\node at (4.8,.75) {$G_2:$};
	\node at (8,.75) {$G_3:$};

	\node[vertex] (1) at (-1.5,0) {s};
	\node[vertex] (2) at (0,0) {a};
	\node[vertex] (3) at (0,-1.5) {b};
	\node[vertex] (4) at (-1.5,-1.5) {z};
	\path[->] 
	(1)  edge[dedge]  node[midway,above,sloped] {\scriptsize$(1,1,1)$}   (2)
	(1)  edge[dedge]  node[midway,below,sloped] {\scriptsize$(3,1,3)$}   (4)
	(2)  edge[dedge]  node[midway,above,sloped] {\scriptsize$(2,1,2)$}   (3)
	(1)  edge[dedge]  node[midway,above,sloped] {\scriptsize$(1,1,2)$}   (3)
	(3)  edge[dedge]  node[midway,below,sloped] {\scriptsize$(2,1,1)$}   (4);	
	
	\node[vertex] (1) at (1.5,0) {s};
	\node[vertex] (2) at (3,0) {a};
	\node[vertex] (3) at (3,-1.5) {b};
	\node[vertex] (4) at (1.5,-1.5) {z};
	\path[->] 
	(1)  edge[dedge]  node[midway,above,sloped] {\scriptsize$(1,1)$}   (2)
	(1)  edge[dedge]  node[midway,above,sloped] {\scriptsize$(1,2)$}   (3);	
	
	\node[vertex] (1) at (4.8,0) {s};
	\node[vertex] (2) at (6.3,0) {a};
	\node[vertex] (3) at (6.3,-1.5) {b};
	\node[vertex] (4) at (4.8,-1.5) {z};
	\path[->] 
	(3)  edge[dedge]  node[midway,below,sloped] {\scriptsize$(1,1)$}   (4)	
	(2)  edge[dedge]  node[midway,above,sloped] {\scriptsize$(1,2)$}   (3);	
	\node[vertex] (1) at (8,0) {s};
	\node[vertex] (2) at (9.5,0) {a};
	\node[vertex] (3) at (9.5,-1.5) {b};
	\node[vertex] (4) at (8,-1.5) {z};
	\path[->] 
	(1)  edge[dedge]  node[midway,below,sloped] {\scriptsize$(1,3)$}   (4)	;	
	\end{tikzpicture}
	\caption{Example for a weighted temporal graph $\tg$. Each edge label $(t,\lambda,c)$ describes the time $t$ when the edge is available, its traversal time $\lambda$ and its cost~$c$. For each time step $t\in\{1,2,3\}$, layer $G_t$ is shown.}
	\label{fig:example_tg}
\end{figure}
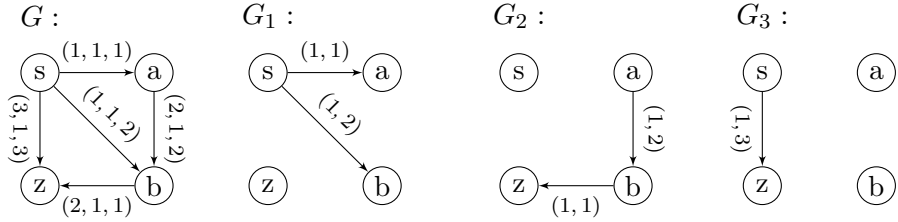

For the discussion of bicriteria path problems, we use the following definitions. 
Let $\mathcal{X}$ be the set of all feasible $(s,z)$-paths, and let $f(P)$ be the temporal value of $P$, i.e. either arrival time $f(P)\coloneqq a(P)$ or duration $f(P)\coloneqq d(P)$.
We call a path $P\in\mathcal{X}$ \emph{efficient} if there is no other path $Q\in\mathcal{X}$ with $c(Q) < c(P)$ and $f(Q) \leq f(P)$ or $c(Q) \leq c(P)$ and $f(Q) < f(P)$.
We map each $P\in\mathcal{X}$ to a vector $(f(P),c(P))$ in the two-dimensional objective space which we denote by $\mathcal{Y}$.
Complementary to efficiency in the decision space, we have the concept of \emph{domination} in the objective space. 
We say $(f(P), c(P))\in\mathcal{Y}$ \emph{dominates} $(f(Q), c(Q))\in\mathcal{Y}$ if either $c(P) < c(Q)$ and $f(P) \leq f(Q)$ or $c(P) \leq c(Q)$ and $f(P) < f(Q)$. 
We call $(f(P), c(P))$ \emph{nondominated} if and only if $P$ is efficient.
We define the bicriteria enumeration problems \probbenum{} and \probaenum{} as follows.
\\\\\textsc{Min-Cost Fastest Paths Enumeration Problem} (\probbenum)
\\\textbf{Given:} A weighted temporal graph $\tg=(V,\tge)$ and $s,z\in V$.
\\\textbf{Task:} Enumerate all and only $(s,z)$-paths that are efficient w.r.t. duration and costs.
\\\\\textsc{Min-Cost Earliest Arrival Paths Enumeration Problem} (\probaenum)
\\\textbf{Given:} A weighted temporal graph $\tg=(V,\tge)$ and $s,z\in V$.
\\\textbf{Task:} Enumerate all and only $(s,z)$-paths that are efficient w.r.t. arrival time and costs.
\\

We denote by \probbs{} and \probas{} the optimization versions, in which the task is to find a single efficient $(s,z)$-path. 
The counting versions \probacount{} and \probbcount{} ask for the number of solutions. 
\subsection{Complexity Classes for Enumeration and Counting Problems}\label{subsection:outputcomplex}
Bi- and multicriteria optimization problems are often not easily comparable using the traditional notion of worst-case complexity, due to their potentially exponential number of efficient solutions. 
We use the \emph{output complexity} model as proposed by Johnson, Yannakakis, and Papadimitriou~\cite{JOHNSON1988119}.
Here, the time complexity is stated as a function in the size of the input and the output.
\begin{definition}
	Let \enump{} be an enumeration problem. Then \enump{} is in
	\begin{enumerate}
		\item \emph{\delayp{} (Polynomial Time Delay)} if the time delay until the output of the first and between the output of any two consecutive solutions is bounded by a polynomial in the input size, and
		\item \emph{\psdelayp{} (Polynomial Time Delay with Polynomial Space)} if \enump{} is in \delayp{} and the used space is also bounded by a polynomial in the input size.
	\end{enumerate}
\end{definition}
In \Cref{section:algprobb,section:algproba} we show that \probbenum{} and \probaenum{} are both in \psdelayp{} if the input graph has strictly positive edge costs. We provide algorithms that enumerate the set of all efficient paths in polynomial time delay and use space bounded by a polynomial in the input size.

By enumerating the set of all efficient paths, we are also able to count them, i.e., enumeration is at least as hard as counting. Naturally, the question arises, if we can obtain the number of efficient paths more easily. Valiant~\cite{valiant1979complexity} introduced the class \spc{}. 
We use the following definitions from Arora and Barak~\cite{arora2009computational}.
\begin{definition}
	A function $f:\{0,1\}^*\rightarrow \mathbb{N}$ is in \emph{\spc{}} if there exists a polynomial $p:\mathbb{N}\rightarrow\mathbb{N}$ and a deterministic polynomial time Turing machine $M$ such that for every $x\in\{0,1\}^*$ $f(x) = |\{y\in \{0,1\}^{p(x)}: M(x,y)=1 \}|\text{,}$
	where $M(x,y)=1$ iff $M$ accepts $x$ with certificate $y$. %
\end{definition}
In other words, the class \spc{} consists of all functions $f$ such that $f(x)$ is equal to the number of computation paths from the initial configuration to an accepting configuration of a polynomial time non-deterministic Turing machine $M$ with input $x$.

In the following, we use the notion of oracle Turing machine where the oracles are not limited to answer only boolean, but arbitrary, polynomially bounded functions.
Let $f:\{0,1\}^*\rightarrow\{0,1\}^*$ be such a function. We define $\textbf{FP}^f$ as the class of functions that can be computed by a polynomial time Turing machine with constant-time access to $f$.
\begin{definition}
	A function $f$ is \emph{\spcomplete} iff it is in \emph{\spc} and $\emph{\spc}\subseteq\textbf{FP}^f$.
\end{definition}
A polynomial time algorithm for a \spcomplete{} problem would imply $\textbf{P}=\textbf{NP}$.
To show \spcompleteness{} for a counting problem $\mathcal{C}$, we verify that $\mathcal{C}\in \spc$ and reduce a \spcomplete{} problem $\mathcal{C}'$ to $\mathcal{C}$,
using a polynomial Turing reduction.
Valiant~\cite{valiant1979complexity} showed that calculating the permanent of a $(0,1)$-matrix is \spcomplete. This is especially interesting because the permanent of a $(0,1)$-matrix corresponds to the number of perfect matchings in a bipartite graph. Deciding if a bipartite graph has a perfect matching is possible in polynomial time. 
Moreover, Valiant~\cite{valiant1979complexityenum} proved \spcompleteness{} for \textsc{ST-Path}.
The input of \textsc{ST-Path} is a static, directed graph $\tg=(V,E)$ and $s,t\in V$, the output is the number of $(s,t)$-paths. %
In \Cref{section:counting}, we give a reduction from \textsc{ST-Path} to the non-weighted earliest arrival temporal path counting problem.
\section{Structural Results}\label{section:structuralresults}
We show that it is possible to find an efficient $(s,z)$-path for the Min-Cost Earliest Arrival Path Problem (\probas{}) in a graph $G$, if we are able to solve the Min-Cost Fastest Path Problem (\probbs).
We use a transformed graph $G'$, in which a new source vertex and a single edge is added. %
The reduction is from a search problem to another search problem.
We show that it preserves the existence of solutions and we also provide a mapping between the solutions.
This is also known as \emph{Levin} reduction. %
\begin{lemma}\label{lemma:levinred}
	There is a Levin reduction from \probas{} to \probbs. %
\end{lemma}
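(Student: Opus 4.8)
The plan is to build the reduction around the single identity relating the two temporal objectives: for any path, $d(P) = a(P) - s(P)$, so arrival time and duration coincide up to the starting time. Hence, if every candidate path could be forced to share one common starting time, minimizing duration would become the same as minimizing arrival time. I would realize this by adding to $G$ a fresh source vertex $s'$ together with a single temporal edge $e_0 = (s', s, t_0, \lambda_0, 0)$ of zero cost, where $t_0$ (and, if the time domain excludes $0$, also $\lambda_0$) are chosen small enough that $t_0 + \lambda_0 \le t$ for the availability time $t$ of every edge in $\delta^+(s)$; concretely $t_0 + \lambda_0 = 0$ works whenever $0$ is an admissible time, and otherwise a fixed shift of all availability times makes this legal. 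Call the resulting instance $G'$ with source $s'$ and unchanged target $z$. This instance map is clearly computable in polynomial time.

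First I would establish that prepending $e_0$ is a bijection between the $(s,z)$-paths of $G$ and the $(s',z)$-paths of $G'$. Since $s'$ has $e_0$ as its only outgoing edge and paths are simple, every $(s',z)$-path in $G'$ must begin with $e_0$ and continue with a path that never returns to $s$; deleting $e_0$ yields an $(s,z)$-path in $G$, and conversely prepending $e_0$ to an $(s,z)$-path $P$ is time-feasible exactly because $t_0 + \lambda_0 \le t_1$ for the first edge of $P$, by the choice of $e_0$. Both directions are computable in linear time, which supplies the solution mappings in either direction required by a Levin reduction.

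Next I would verify that this bijection transports the whole Pareto structure. For corresponding paths $P \leftrightarrow P'$ we have $c(P') = c(P)$ because $e_0$ has cost $0$, and $a(P') = a(P)$ because the last edge is unchanged, while $s(P') = t_0$ is the same constant for every $P'$; hence $d(P') = a(P) - t_0$. Because $t_0$ is fixed and path-independent, $a(P) \le a(Q) \iff d(P') \le d(Q')$, the same holds with strict inequalities, and the cost coordinate is untouched, so the domination relation of \probas{} on $G$ coincides exactly with that of \probbs{} on $G'$. Consequently $P$ is efficient for \probas{} on $(G,s,z)$ iff $P'$ is efficient for \probbs{} on $(G',s',z)$; since the underlying path sets are in bijection, an efficient path exists for one instance iff it exists for the other. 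Together with the polynomial-time instance and solution maps, this is precisely a Levin reduction from \probas{} to \probbs{}.

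The step I expect to require the most care is the construction of $e_0$: its timing must be simultaneously legal for every original $(s,z)$-path (so that prepending $e_0$ is onto) yet must leave each arrival time untouched, and one must check the edge cases of the time domain (whether availability or traversal times may be $0$) so that the offset $t_0$ is genuinely the same for all paths. Everything after that is a routine consequence of the offset identity $d(P') = a(P) - t_0$.
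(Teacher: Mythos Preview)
Your proposal is correct and follows essentially the same approach as the paper: add a fresh source $s'$ with a single zero-cost edge $e_0=(s',s,0,0,0)$, observe that prepending $e_0$ is a bijection between $(s,z)$-paths in $G$ and $(s',z)$-paths in $G'$, and use $d(P')=a(P)-t_0$ with $t_0$ constant to conclude that the two domination orders coincide. The only difference is presentational: the paper spells out the two domination cases explicitly, whereas you invoke the offset identity once; your added care about the admissibility of $t_0=\lambda_0=0$ is a reasonable robustness check that the paper simply takes for granted.
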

\begin{proof}
	Let $I=(G=(V,E),s,z)$ be an instance of \probas. We construct the \probbs{} instance $I'=(G'=(V\cup\{s'\}, E\cup \{e_0=(s',s,0,0,0)\}), s',z)$.
	Furthermore, let $\mathcal{X}_I$ be the sets of all $(s,z)$-paths for $I$, and $\mathcal{X}_{I'}$ be the sets of all $(s',z)$-paths for $I'$. 
	We define $g:\mathcal{X}_I\rightarrow \mathcal{X}_{I'}$ as bijection that prepends edge $e_0$ to the paths in $\mathcal{X}_I$, i.e., $g((e_1,\ldots,e_k))=(e_0,e_1,\ldots,e_k)$. 
	We show that $P\in \mathcal{X}_I$ is efficient (for \probas) iff $g(P)\in \mathcal{X}_{I'}$ is efficient (for \probbs{}). 
	
	Let $P=(e_1,\ldots,e_k)$ be an efficient $(s,z)$-path in $G$ w.r.t. costs and arrival time.
	Then $Q\coloneqq g(P)=(e_0,e_1,\ldots,e_k)$ is an $(s',z)$-path in $G'$ with $a(Q)=a(P)$ and $c(Q)=c(P)$.
	Now, assume $Q$ is not efficient w.r.t. costs and duration in $G'$. 
	Then there is a path $Q'$ with less costs and at most the duration of $Q$ or with shorter duration and at most the same costs of $Q$.
	Path $Q'$ also begins with edge $e_0$, and $G$ contains a path $P'$ that uses the same edges as $Q'$ with exception of edge $e_0$.
	Then, at least one of the following two cases holds.
	\begin{itemize}
		\item Case $c(Q')\leq c(Q)$ and $d(Q')<d(Q)$: Since the costs of $e_0$ are $0$, it follows that $c(P')=c(Q')\leq c(Q)=c(P)$.
		Because the paths start at time $0$ and for each path $d(P)=a(P)-s(P)$,
		it follows $d(Q')=a(Q')=a(P')<a(P)=a(Q)=d(Q)$. 
		\item Case $c(Q')<c(Q)$ and $d(Q')\leq d(Q)$: 
		Analogously, here we have $c(P')=c(Q')<c(Q)=c(P)$. And $a(P')=a(Q')\leq a(Q)=a(P)$.
	\end{itemize}
	Either of these two cases leads to a contradiction to the assumption that $P$ is efficient.

	Let $Q=(e_0,e_1,\ldots,e_k)$ and assume it is an efficient $(s',z)$-path in $G'$ w.r.t. to cost and duration. %
	Then there exists an $(s,z)$-path $P=(e_1,\ldots,e_k)$ in $G$ such that $g(P)=Q$, $a(Q)=a(P)$ and $c(Q)=c(P)$.
	Now, assume that $P$ is not efficient. %
	Then there is a path $P'$ with less costs and not later arrival time than $P$ or with earlier arrival time and at most the costs of $P$.
	In $G'$ exists the path $Q'=g(P')$ that uses the same edges as $P'$, and additionally the edge $e_0$ as \prefixp{} from $s'$ to $s$.
	We have the cases $c(P')<c(P)$ and $a(P')\leq a(P)$ or $c(P')\leq c(P)$ and $a(P')<a(P)$. Again, either of them leads to a contradiction to the assumption that $Q$ is efficient.
\end{proof}
Based on this result, we first present an algorithm for \probbenum{} in \Cref{section:algprobb} that we use in a modified version to solve \probaenum{} in \Cref{section:algproba}. 
In the rest of this section, we focus on graphs with strictly positive edge costs. 
\begin{observation}\label{observation:posweightsimple}
	Let $G=(V,E)$ be a weighted temporal graph.
	If for all edges $e=(u,v,t,\lambda, c)\in \tge$ it holds that $c>0$, then all efficient walks for \probaenum{} and \probbenum{} are simple, i.e., are paths.
\end{observation}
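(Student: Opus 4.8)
The plan is to argue by contradiction: I assume that some efficient $(s,z)$-walk $W=(e_1,\ldots,e_k)$, with $e_i=(v_i,v_{i+1},t_i,\lambda_i,c_i)$, fails to be simple, and I exhibit another $(s,z)$-walk that dominates it. Non-simplicity means the vertex sequence $v_1,\ldots,v_{k+1}$ repeats some vertex, so there are indices $p<q$ with $v_p=v_q=:w$. The core construction is the \emph{shortcut} walk $W'$ obtained by deleting the closed sub-walk between the two visits of $w$, that is $W'=(e_1,\ldots,e_{p-1},e_q,\ldots,e_k)$. (If $p=1$ the prefix is empty and $W'=(e_q,\ldots,e_k)$, which still departs in $w=v_1=s$; the degenerate case $s=z$ can be excluded at the outset, since then the empty walk is trivially the unique efficient solution.)

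First I would verify that $W'$ is again a feasible temporal $(s,z)$-walk. Every junction of $W'$ except the one gluing $e_{p-1}$ to $e_q$ is inherited from $W$, so only the spliced junction must be checked. Here I would use that availability times along a temporal walk are non-decreasing: from $t_i+\lambda_i\le t_{i+1}$ and $\lambda_i\ge 0$ we get $t_p\le t_{p+1}\le\cdots\le t_q$, and combining with the original feasibility $t_{p-1}+\lambda_{p-1}\le t_p$ yields $t_{p-1}+\lambda_{p-1}\le t_q$, which is exactly the time-consistency condition required at the splice.

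Next I would compare objective values. For the cost, $c(W')=c(W)-\sum_{i=p}^{q-1}c_i$; since $q>p$ at least one edge is removed, and because every edge cost is strictly positive this removed sum is positive, so $c(W')<c(W)$. For the temporal value I distinguish the criterion. The arrival time depends only on the final edge, which is untouched, so $a(W')=a(W)$; this already settles the earliest-arrival problem \probaenum{}. For the fastest-path problem \probbenum{} I also track the starting time: $s(W')=t_1=s(W)$ unless $p=1$, in which case $s(W')=t_q\ge t_1=s(W)$, so in either case $s(W')\ge s(W)$ and hence $d(W')=a(W')-s(W')\le a(W)-s(W)=d(W)$. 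Thus for both criteria $W'$ has strictly smaller cost and no worse temporal value, so $W'$ dominates $W$, contradicting the efficiency of $W$. Consequently every efficient walk is simple, i.e.\ a path.

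I expect the substantive ideas to be straightforward and the delicate parts to be purely bookkeeping: confirming that the spliced junction is time-feasible (resting on the monotonicity of availability times), and handling the boundary case $p=1$ where the source itself is revisited, so that the shortcut may depart strictly later but this only decreases the duration and leaves the arrival time unchanged.
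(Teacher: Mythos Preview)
Your argument is correct and is precisely the approach the paper indicates; the paper itself gives only a one-sentence sketch (``delete the edges of a cycle contained in the non-simple walk''), and your write-up fills in exactly the details one would expect. One small boundary case you treat asymmetrically: you handle $p=1$ explicitly but not $q=k+1$ (i.e.\ the repeated vertex is $z$ itself), in which case the final edge \emph{is} removed and your claim $a(W')=a(W)$ becomes $a(W')\le a(W)$; this only strengthens the domination, so the conclusion is unaffected.
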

Similar to the non-temporal static case, it is possible to delete the edges of a cycle contained in the non-simple walk.
We denote the two special cases for graphs with strictly positive edge costs by \probbenumsp{} and \probaenumsp.
Our enumeration algorithms use a label setting technique. 
A label $l=(b,a,c,p,v,r,\Pi)$ at vertex $v\in V$ corresponds to an $(s,v)$-path and consists of the following entries:
\begin{center}
    \begin{tabular}{c@{\hspace{20mm}}l}
       $b=s(P_{s,v})$ & starting time, \\
       $a=a(P_{s,v})$ & arrival time, \\
       $c=c(P_{s,v})$ & cost, \\
       $p$ & predecessor label, \\
       $v$ & current vertex, \\
       $r$ & availability time of the previous edge and \\
       $\Pi$ & reference to a list of equivalent labels.
    \end{tabular}
\end{center}
Moreover, each label is uniquely identifiable by an additional identifier and has a reference to the edge that lead to its creation (denoted by $l.edge$). 
The proposed algorithms process the edges in order of their availability time. 
When processing an edge $e=(u,v,t,\lambda,c_e)$, all paths that end at vertex $u$ can be extended by pushing labels over edge~$e$ to vertex~$v$. Pushing a label $l=(b,a,c,p,u,r,\Pi)$ over $e$ means that we create a new label $l_{new}=(b,t+\lambda,c+c_e,l,v,t,\cdot)$ at vertex~$v$.

If we would create and store a label for each efficient path, we possibly would need exponential space in the size of the input. 
The reason is that the number of efficient paths can be exponential in the size of the input. 
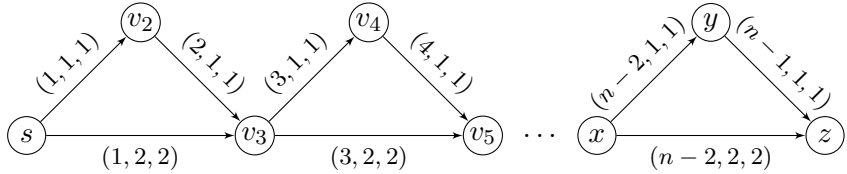
\begin{figure}[ht]
	\centering
	\begin{tikzpicture}
	\node[vertex] (1) at (-3,-1.5) {$s$};
	\node[vertex] (2) at (0,-1.5)  {$v_3$};
	\node[vertex] (3) at (-1.5,0)  {$v_2$};
	\node[vertex] (4) at (4.5,-1.5)  {$x$};
	\node[vertex] (5) at (1.5,0) {$v_4$};
	\node[vertex] (6) at (7.5,-1.5) {$z$};
	\node[vertex] (7) at (6,0) {$y$};
	\node[vertex] (8) at (3,-1.5)  {$v_5$};	
	
	\node at (3.75,-1.5)  {$\ldots$};	
	
	\path[->] 
	(1)  edge[dedge]  node[sloped,above,midway] {$(1,1,1)$}   (3)
	(1)  edge[dedge]  node[midway,below]  {$(1,2,2)$}  (2)
	(3)  edge[dedge]  node[sloped,above,midway] {$(2,1,1)$}   (2)		
	
	(2)  edge[dedge]  node[sloped,above,midway] {$(3,1,1)$}   (5)
	(2)  edge[dedge]  node[midway,below] {$(3,2,2)$}   (8)
	(5)  edge[dedge]  node[sloped,above,midway] {$(4,1,1)$}   (8)
	
	(4)  edge[dedge]  node[midway,below] {$(n-2,2,2)$}   (6)
	(4)  edge[dedge]  node[sloped,above,midway] {$(n-2,1,1)$}   (7)
	(7)  edge[dedge]  node[sloped,above,midway] {$(n-1,1,1)$}   (6);	
	\end{tikzpicture}
	\caption{Example for an exponential number of efficient paths for \probaenum{} and \probbenum. All $(s,v)$-paths for $v\in V$ are efficient.}
	\label{fig:exp_eff_path_01}
\end{figure}
\Cref{fig:exp_eff_path_01} shows an example for \probbenum{} and \probaenum, which
is similar to the one provided by Hansen~\cite{HansenBSP}, but adapted to the weighted temporal case.
$G$ has $n=\frac{2m}{3}+1$ vertices and $m>3$ edges. There are two paths from $s$ to $v_3$, four paths from $s$ to $v_5$, eight paths from $s$ to $v_7$ and so on. All $(s,v)$-paths for $v\in V$ are efficient. In total, there are $2^{\lfloor\frac{n}{2}\rfloor}$ efficient $(s,z)$-paths to be enumerated. 
However, the following lemma shows properties of the problems that help us to achieve polynomial time delay and a linear or polynomial space complexity.
Let $\mathcal{Y}_{A}$ ($\mathcal{Y}_{F}$) denote the objective space for \probaenum{} (\probbenum{}, respectively).
Moreover, we define $\mathcal{S}$ to be the number of distinct availability times of edges leaving the source vertex $s$.
\begin{lemma}\label{lemma:prop1}
	For \probaenum{}, the number of nondominated points in $\mathcal{Y}_A$ is in $\mathcal{O}(m)$. 
	For \probbenum{}, the number of nondominated points in $\mathcal{Y}_F$ is in $\mathcal{O}(\mathcal{S}\cdot m)=\mathcal{O}(m^2)$.
\end{lemma}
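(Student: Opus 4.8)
The plan is to bound the number of nondominated points in each objective space by counting how many distinct values each coordinate can take among efficient solutions, then arguing that a nondominated set is an \emph{antichain} with at most one point per distinct value of the temporal coordinate.

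For \probaenum{} I would first observe that the arrival-time coordinate $f(P)=a(P)$ of any $(s,z)$-path equals $t_k+\lambda_k$, where $e_k=(v_k,z,t_k,\lambda_k,c_k)$ is the last edge of $P$. Since every such last edge must be an incoming edge of $z$, there are at most $|\delta^-(z)|\le m$ distinct possible arrival times. The key structural fact is that a nondominated set is an antichain in the dominance order: if two efficient paths $P,Q$ had the same arrival time $a(P)=a(Q)$, then the one with strictly larger cost would be dominated by the one with smaller cost (and if costs are equal their objective vectors coincide, so they map to the same point of $\mathcal{Y}_A$). Hence each distinct arrival-time value contributes at most one nondominated point, giving $|\mathcal{Y}_A|\in\mathcal{O}(m)$.

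For \probbenum{} the temporal coordinate is the duration $d(P)=a(P)-s(P)$. Here $a(P)$ ranges over at most $m$ values as before, but the starting time $s(P)=t_1$ is the availability time of the first edge, which must be an edge leaving $s$; by definition of $\mathcal{S}$ there are at most $\mathcal{S}$ distinct such values. Therefore the duration $d(P)$ takes at most $\mathcal{S}\cdot m$ distinct values over all $(s,z)$-paths. Applying the same antichain argument as above — at most one nondominated point per distinct duration value — yields $|\mathcal{Y}_F|\in\mathcal{O}(\mathcal{S}\cdot m)$. Finally I would note $\mathcal{S}\le m$ since there are at most $m$ edges in total (hence at most $m$ distinct availability times leaving $s$), giving the stated $\mathcal{O}(m^2)$ bound.

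The step I expect to require the most care is making the antichain argument precise, namely that the nondominated points of $\mathcal{Y}$ form an antichain so that fixing the temporal coordinate pins down the point uniquely. One must check that two distinct \emph{paths} sharing the same temporal value cannot both be efficient unless they have identical cost (in which case they collapse to one \emph{point} of $\mathcal{Y}$, and we are counting points, not paths). This hinges directly on the definition of domination given in the preliminaries, so the argument is essentially a careful unpacking of that definition rather than a deep combinatorial obstacle; the counting of distinct coordinate values from incidence at $s$ and $z$ is routine once this is in place.
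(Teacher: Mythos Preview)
Your proposal is correct and follows essentially the same approach as the paper: bound the distinct values of the temporal coordinate by $|\delta^-(z)|$ for arrival times and by $\mathcal{S}\cdot|\delta^-(z)|$ for durations, then observe that a nondominated set can contain at most one point per distinct temporal value. Your framing via antichains is just a rewording of the paper's ``for each arrival time/duration there can only be one nondominated point with minimum cost''.
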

\begin{proof}
	Let $G=(V,E)$ be a weighted temporal graph and $s,z\in V$.
	First we show that the statement holds for \probaenum.
	The possibilities for different arrival times at vertex $z$ is limited by the number of incoming edges at $z$. For each path $P_{s,z}$ we have 
	\[a(P_{s,z})\in\{\alpha\mid \alpha\coloneqq t_e+\lambda_e \text{ with } e=(u,z,t_e,\lambda_e,c) \in \delta^-(z)\}\text{.}\]
	Consequently, there are at most $|\delta^-(z)|\in\mathcal{O}(m)$ different arrival times. 
	For each arrival time $a$, there can only be one nondominated point $(a,c)\in \mathcal{Y}_A$ that has the minimum costs of $c$, and which represents exactly all efficient paths with arrival time $a$ and costs $c$.
	
	Now consider the case for \probbenum. The number of distinct availability times of edges leaving the source vertex $\mathcal{S}$ is bounded by $|\delta^+(s)|\in \mathcal{O}(m)$.
	Because the duration of any $(s,z)$-path $P$ equals $a(P)-s(P)$, there are at most $\mathcal{S}\cdot |\delta^-(z)|\in\mathcal{O}(m^2)$ different durations possible at vertex $z$.
	For each duration, there can only be one nondominated point $(d,c)\in \mathcal{Y}_F$ having minimum costs $c$. %
\end{proof}
Note that for general bicriteria optimization (path) problems there can be an exponential number of nondominated points in the objective space. Skriver and Andersen~\cite{SKRIVER2000507} give an example for a family of graphs with an exponential number of nondominated points for a bicriteria path problem.
The fact that in our case the number of nondominated points in the objective space is polynomially bounded, allows us to achieve polynomial time delay and space complexity for our algorithms. 
The idea is to consider equivalence classes of labels at each vertex, such that we only have to proceed with a single representative for each class.
First, we define the following relations between labels.
\begin{definition}\label{definition:labeldom2}
	Let $l_1=(b_1,a_1,c_1,p_1,v,r_1,\Pi_1)$ and $l_2=(b_2,a_2,c_2,p_2,v,r_2,\Pi_2)$ be two labels at vertex $v$.
	\begin{enumerate}
		\item Label $l_1$ is \emph{equivalent} to $l_2$ iff $c_1=c_2$ and $b_1=b_2$. 
		\item Label $l_1$ \emph{predominates} $l_2$ if $l_1$ and $l_2$ are not equivalent, $b_1 \geq b_2$, $a_1 \leq a_2$ and $c_1 \leq c_2$ with at least one of the inequalities being strict.  
		\item Finally, label $l_1$ \emph{dominates} $l_2$ if $a_1-b_1 \leq a_2-b_2$ and $c_1 \leq c_2$ with at least one of the inequalities being strict.  
	\end{enumerate}
\end{definition}
For each class of equivalent labels, we have a representative $l$ and a list $\Pi_l$ that contains all equivalent labels to $l$. 
For each vertex $v\in V$, we have a set $R_v$ that contains all representatives. %
The algorithms consist of two consecutive phases:
\begin{itemize}
	\item \emph{Phase 1} calculates the set of non-equivalent representatives $R_v$ for every vertex $v\in V$ such that every label in $R_v$ represents a set of equivalent paths from $s$ to $v$. 
	For each of the nonequivalent labels $l\in R_v$ we store the list $\Pi_l$ that contains all labels equivalent to $l$. 
	\item \emph{Phase 2} recombines the sets of equivalent labels in a backtracking fashion, such that we are able to enumerate exactly all efficient $(s,z)$-paths without holding the paths in memory.
\end{itemize}
A label in $\Pi_l$ at vertex $v$ represents all $(s,v)$-paths that are extension of all paths represented by its predecessor, and
$l\in R_v$ is a representative for all labels in $\Pi_l$. The representative itself is in $\Pi_l$ and has minimum arrival time among all labels in $\Pi_l$.

We have to take into account that a \prefixp{} $P_{s,w}$ of an efficient $(s,z)$-path may not be an efficient $(s,w)$-path. 
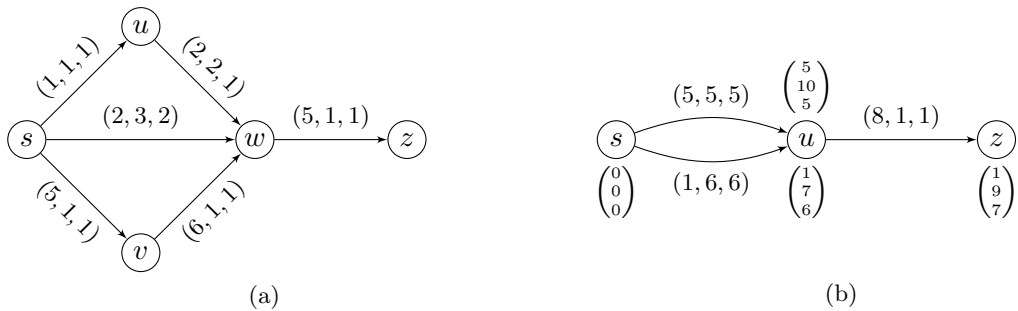
\begin{figure}[ht]
	\subfloat[]{ %
		\begin{minipage}{0.46\textwidth}
		\begin{tikzpicture}
		\node[vertex] (1) at (-2.5,0.5) {$u$};
		\node[vertex] (2) at (-1,-1) {$w$};
		\node[vertex] (3) at (1.0,-1) {$z$};
		\node[vertex] (4) at (-2.5,-2.5) {$v$};
		\node[vertex] (5) at (-4,-1) {$s$};			
		
		\path[->] 
		(1)  edge[dedge]  node[midway,above,sloped] {$(2,2,1)$}   (2)
		(5)  edge[dedge]  node[midway,above,sloped] {$(2,3,2)$}   (2)	
		(4)  edge[color=black,dedge]  node[midway,below,sloped] {$(6,1,1)$}   (2)
		(2)  edge[color=black,dedge]  node[midway,above,sloped] {$(5,1,1)$}   (3)
		(5)  edge[dedge]  node[midway,above,sloped] {$(1,1,1)$}   (1)
		(5)  edge[color=black,dedge]  node[midway,below,sloped] {$(5,1,1)$}   (4);	
		\end{tikzpicture}
	\end{minipage}
	}\subfloat{\begin{minipage}{.03\textwidth}\hfill\end{minipage}}
    \setcounter{subfigure}{1}
	\subfloat[]{ %
		\begin{minipage}{0.46\textwidth}
	\begin{tikzpicture}
	\node at (2.5,-3.1) {};
	\node at (2.5,.1) {};
	\node at (2.5,-2.2) {$\tiny\begin{pmatrix}0\\0\\0\end{pmatrix}$};	
	\node at (5,-0.8) {$\tiny\begin{pmatrix}5\\10\\5\end{pmatrix}$};	
	\node at (5,-2.2) {$\tiny\begin{pmatrix}1\\7\\6\end{pmatrix}$};			
	\node at (7.5,-2.2) {$\tiny\begin{pmatrix}1\\9\\7\end{pmatrix}$};				
	\node[vertex] (s) at (2.5,-1.5) {$s$};			
	\node[vertex] (u) at (5,-1.5) {$u$};
	\node[vertex] (z) at (7.5,-1.5) {$z$};
	
	\path[->] 
	(s)  edge[dedge, bend left=20]  node[midway,above,sloped] {$(5,5,5)$}   (u)
	(s)  edge[dedge, bend right=20]  node[midway,below,sloped] {$(1,6,6)$}   (u)
	(u)  edge[dedge]  node[midway,above,sloped] {$(8,1,1)$}   (z);
	
	\end{tikzpicture}
		\end{minipage}
		}
	\caption{(a) An example for non-efficient \prefixp{}s. (b) The vertices are annotated with labels that describe the starting time, arrival time and costs of the paths starting at $s$.}
	\label{fig:counter_example_optimality_principle_01}
\end{figure}
\Cref{fig:counter_example_optimality_principle_01} (a) shows an example for a weighted temporal graph with a non-optimal \prefixp{}. Consider the following paths: 
\begin{itemize}
	\item $P^*_{s,z}=((s,w,2,3,2),(w,z,5,1,1))$ with arrival time $6$ and duration $4$
	\item $P^1_{s,w}=((s,w,2,3,2))$ with arrival time $5$ and duration $3$
	\item $P^2_{s,w}=((s,u,1,2,1),(u,w,2,1,1))$ with arrival time $3$ and duration $3$
	\item $P^3_{s,w}=((s,v,5,1,1),(v,w,6,1,1))$ with arrival time $7$ and duration $2$
\end{itemize}
All $(s,z)$-paths have cost $3$ and all $(s,w)$-paths have cost $2$. Path $P^*_{s,z}$ is efficient for \probbenum{} and \probaenum{}. For \probaenum{} the \prefixp{} $P^1_{s,w}$ is not efficient, because $P^2_{s,w}$ arrives earlier. However, for \probbenum{} the only efficient $(s,w)$-path is $P^3_{s,w}$. Consequently, we cannot discard a non-efficient path that possibly is a \prefixp{} of an efficient path. 
We use the predomination relation to remove all labels that do not represent \prefixp{s} of efficient paths.
\begin{lemma}\label{lemma:domination1}
	Let $l_1=(b_1,a_1,c_1,p_1,v,r_1,\Pi_1)$ and $l_2=(b_2,a_2,c_2,p_2,v,r_2,\Pi_2)$ be two distinct labels at vertex $v\in V$.
	If $l_1$ predominates $l_2$, then $l_2$ cannot be a label representing a \prefixp{} of any efficient path. 
\end{lemma}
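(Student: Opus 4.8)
The plan is to argue by contradiction: assume the $(s,v)$-path $P_2$ represented by $l_2$ is a \prefixp{} of some efficient $(s,z)$-path $P = P_2 \cdot Q$, where $Q$ is the corresponding \suffixp{} from $v$ to $z$, and derive a contradiction to the efficiency of $P$. The natural candidate for a better path is obtained by swapping the prefix: let $P_1$ be the $(s,v)$-path represented by $l_1$ and set $P' \coloneqq P_1 \cdot Q$. First I would check that $P'$ is a temporally feasible walk. Since $P = P_2 \cdot Q$ is feasible, the first edge of $Q$ has availability time at least $a_2 = a(P_2)$; as $a_1 = a(P_1) \leq a_2$ by predomination, the walk $P'$ is feasible as well (and if $Q$ is empty, i.e.\ $v=z$, then $P' = P_1$ is trivially feasible).

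Next I would compare the objective values of $P'$ and $P$ directly from the predomination inequalities. Because $Q$ contributes the same edge costs to both walks, $c(P') = c_1 + c(Q) \leq c_2 + c(Q) = c(P)$. The starting time satisfies $s(P') = b_1 \geq b_2 = s(P)$, and the arrival time satisfies $a(P') \leq a(P)$ (with equality when $Q$ is nonempty, since then the last edge of $Q$ fixes both arrival times, and $a(P') = a_1 \le a_2 = a(P)$ when $Q$ is empty). Hence $d(P') = a(P') - s(P') \leq a(P) - s(P) = d(P)$. Using that $l_1$ and $l_2$ are not equivalent together with $b_1 \ge b_2$ and $c_1 \le c_2$, at least one of the strict inequalities $c_1 < c_2$ or $b_1 > b_2$ must hold; in the first case $c(P') < c(P)$, and in the second $d(P') \le a(P) - b_1 < a(P) - b_2 = d(P)$. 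So $P'$ already weakly dominates $P$ with a strict improvement in one coordinate.

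The remaining, and main, obstacle is that $P'$ need not be simple, whereas efficiency for \probbenum{} is defined over $(s,z)$-\emph{paths}: the swapped prefix $P_1$ may revisit a vertex of $Q$. I would resolve this by repeatedly short-cutting cycles of $P'$ to obtain a simple $(s,z)$-path $\tilde P$, verifying that each short-cut (a) preserves temporal feasibility, because availability times are nondecreasing along any feasible walk, so the constraint across the spliced junction still holds; (b) can only delay the start time and advance the arrival time (removing a cycle through $s$ shifts the first edge to a later departure, removing a cycle through $z$ truncates to an earlier arrival); and (c) strictly decreases the cost, since all edge costs are positive — this is exactly the mechanism behind \Cref{observation:posweightsimple}. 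Consequently $c(\tilde P) \le c(P')$, $s(\tilde P) \ge s(P')$ and $a(\tilde P) \le a(P')$, so $\tilde P$ still satisfies $c(\tilde P) \le c(P)$ and $d(\tilde P) \le d(P)$ and keeps the strict improvement established above. Thus $\tilde P$ is a feasible $(s,z)$-path dominating the efficient path $P$, a contradiction, proving that $l_2$ cannot represent a \prefixp{} of an efficient path. I expect the timing bookkeeping in the cycle-removal step to be where the care is needed; the earliest-arrival variant (\probaenum) then follows from the same argument applied to the transformed instance of \Cref{lemma:levinred}, where every path starts at time $0$ and predomination collapses to domination in cost and arrival time.
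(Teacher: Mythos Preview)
Your argument is correct and follows the paper's proof: replace the prefix $P_2$ by $P_1$ (feasible since $a_1\le a_2$), then use $b_1\ge b_2$ and $c_1\le c_2$ together with non-equivalence to conclude that the spliced walk strictly improves $P$ in duration or cost, contradicting efficiency. You go beyond the paper at one point: the paper simply calls the spliced object $P'_{s,w}$ a ``path'' and proceeds, whereas you notice that $P_1\cdot Q$ need not be simple and repair this by short-cutting cycles under the strictly-positive-cost hypothesis of this section --- a step the paper leaves implicit but which is genuinely needed, since efficiency is defined relative to the set $\mathcal{X}$ of simple $(s,z)$-paths.
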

\begin{proof}
	There are two distinct paths $P^1_{s,v}$ and $P^2_{s,v}$ from $s$ to $v$ corresponding to $l_1$ and $l_2$. 
	Due to the predomination of $l_1$ over $l_2$ it follows that $a_1=a(P^1_{s,v})\leq a(P^2_{s,v})=a_2$, $b_1=s(P^1_{s,v})\geq s(P^2_{s,v})=b_2$ and $c_1=c(P^1_{s,v})\leq c(P^2_{s,v})=c_2$ with at least one of the later two relations being strict, due to the fact that the labels are not equivalent.
	Let $P_{s,w}$ be a path from $s$ to some $w\in V$ such that $P^2_{s,v}$ is \prefixp{} of $P_{s,w}$, and assume that $P_{s,w}$ is efficient.
	Let $P'_{s,w}$ be the path where the \prefixp{} $P^2_{s,v}$ is replaced by $P^1_{s,v}$. This is possible because $a(P^1_{s,v})\leq a(P^2_{s,v})$. %
	Now, since $s(P^1_{s,v})\geq s(P^2_{s,v})$ and $c(P^1_{s,v})\leq c(P^2_{s,v})$ with at least one of the inequalities being strict, it follows that $a(P'_{s,w})-s(P'_{s,w})\leq a(P_{s,w})-s(P_{s,w})$ and $c(P'_{s,w})\leq c(P_{s,w})$ also with one of the inequalities being strict. Therefore, $P'_{s,w}$ dominates $P_{s,w}$, a contradiction to the assumption that $P_{s,w}$ is efficient.	
\end{proof}
\Cref{fig:counter_example_optimality_principle_01} (b) shows an example for a non-predominated label of a \prefixp{} that we cannot discard. 
Although path $P_1=((s,u,5,5,5))$ dominates path $P_2=((s,u,1,6,6))$, we cannot discard $P_2$. 
The reason is that the arrival time of $P_1$ is later than the availability time of the only edge from $u$ to $z$. Therefore, $P_2$ is the \prefixp{} of the only efficient path $((s,u,1,6,6),(u,z,8,1,1))$.
\section{Min-Cost Fastest Path Enumeration Problem}\label{section:algprobb}
In this section, we present the algorithm for \probbenum.
\Cref{alg:alg2} expects as input a weighted temporal graph with strictly positive edge costs in the edge stream representation, the source vertex $s\in V$ and the target vertex $z\in V$.
\begin{algorithm}\mbox{\hfill}
	\\\textbf{Input:} Graph $\tg$ in edge stream representation, source $s\in V$ and target $z\in V$
	\\\textbf{Output:} All efficient $(s,z)$-paths
	\begin{algorithmic}[1]
		\Statex \textit{Phase 1}
		\State{initialize $R_v$ for each $v\in V$}
		\State{insert label $l_{init}=(0,0,0,-,s,-,\Pi_{l_{init}})$ into $R_s$ and $\Pi_{l_{init}}$}
		\For{each edge $e=(u,v,t_e,\lambda_e,c_e)$}\label{alg:alg_2:mainloop}
		\State $S\leftarrow \{(b,a,c,p,v,r,\cdot) \in R_u \mid a\leq t_e$, $c$ minimal and distinct starting times $b\}$\label{alg:alg_2:step:minset}
		\For{each $l=(b,a,c,p,v,r,\cdot) \in S$ with $a\leq t_e$}\label{alg:step:break1}
		\If{$u=s$}
		\State $l_{new}\leftarrow(t_e,t_e+\lambda_e,c_e,l,s,t_e,\cdot)$
		\Else
		\State $l_{new}\leftarrow(b,t_e+\lambda_e,c+c_e,l,u,t_e,\cdot)$
		\EndIf
		\For{each $l'=(b',a',c',p',v,t',\Pi')\in R_v$}\label{alg:step:dom}
		\If{$l_{new}$ predominates $l'$} 
		\State remove $l'$ from $R_v$ and delete $\Pi'$
		\ElsIf {$l'$ is equivalent to $l_{new}$} 
		\State insert $l_{new}$ into $\Pi'$
		\State set reference $\Pi\leftarrow \Pi'$		
		\If{$t_e+\lambda_e<a'$}
			\State replace $l'$ in $R_v$ by $l_{new}$
		\EndIf
		\State goto \ref{alg:step:break1}
		\ElsIf {$l'$ predominates $l_{new}$} 
		\State delete $l_{new}$ 
		\State goto \ref{alg:step:break1}
		\EndIf
		\EndFor
		\State insert $l_{new}$ into $R_v$ and initialize $\Pi_{l_{new}}$ with $l_{new}$ 
		\EndFor
		\EndFor
		\Statex \textit{Phase 2}
		\State mark nondominated labels in $R_z$\label{alg:step:dom_2} 
		\For{each  marked label $l'=(b,a,c,p,z,r,\Pi) \in R_z$}
		\For {each label $l\in \Pi$ with minimal arrival time} 
		\State initialize empty path $P$
		\State call \textsc{OutputPaths}($l$, $P$); %
		\EndFor
		\EndFor
		\Statex
		\Statex \textit{Procedure for outputting paths}		
		\Procedure{OutputPaths}{label $l=(b,a,c,p,cur,r,\Pi)$, path $P$} %
			\State prepend edge $l.edge$ to $P$ \label{alg:alg_earliest:step:opprepend}
			\If{$l$ has predecessor $p=(b_p,a_p,c_p,p_p,v_p,r_p,\Pi_p)$}
				\For {each label $l'=(b_{l'},a_{l'},c_{l'},p_{l'},v_{l'},r_{l'},\Pi_{p})$ in $\Pi_p$}
					\If{$a_{l'}\leq r$}\label{alg:alg_earliest:step:feasiblepath}
						\State call \textsc{OutputPaths}($l'$, $P$, $visited$)
					\EndIf
				\EndFor
				\State \Return
			\EndIf
			\State output path $P$\label{alg:alg_earliest:step:opout}
		\EndProcedure
	\end{algorithmic}		
	\caption{for \probbenum}
	\label{alg:alg2}
\end{algorithm}
First, we insert an initial label $l_{init}$ into $R_s$ and $\Pi_{l_{init}}$. The algorithm then successively processes the $m$ edges in order of their availability time.
For each edge $e=(u,v,t,\lambda,c)$, we first determine the set $S\subseteq R_u$ of labels with distinct starting times, minimal costs and an arrival time less or equal to $t$ at vertex $u$ (line \ref{alg:alg_2:step:minset}). Next, we push each label in $S$ over $e$.
We check for predomination and equivalence with the other labels in $R_v$ and discard all predominated labels. 
In case the new label is predominated, we discard it and continue with the next label in $S$. In case that the new label $l_{new}$ is equivalent to a label $l=(a,c,p,v,t_e,\Pi)\in R_v$, we add $l_{new}$ to $\Pi$.  If $l_{new}$ arrives earlier at $v$ than the arrival time of $l$, we replace the representative $l$ with $l_{new}$ in $R_v$. If the new label is not predominated and not equivalent to any label in $R_v$, we insert $l_{new}$ into $R_v$ and $\Pi_{l_{new}}$. In this case, $l_{new}$ is a new representative and we initialize $\Pi_{l_{new}}$ (which contains only $l_{new}$ at this point). %
For the following discussion, we define the set of all labels at vertex $v\in V$ as $L_v\coloneqq \bigcup\limits_{l\in R_v} \Pi_l$.
\begin{lemma}\label{lemma:effpathlables_alg2}
	Let $P_{s,v}$ be an efficient path and $P_{s,w}$ a \prefixp{} of $P_{s,v}$. %
	At the end of Phase 1 of \Cref{alg:alg2}, $R_w$ contains a label representing $P_{s,w}$.
\end{lemma}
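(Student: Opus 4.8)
The plan is to prove the statement by induction on the number of edges $i$ of the \prefixp{} $P_{s,w}=(e_1,\dots,e_i)$, strengthening the hypothesis so that it carries enough information to survive an edge push. Write $w_0=s,w_1,\dots,w_j,\dots$ for the vertices visited by $P_{s,v}$ and $P_j=(e_1,\dots,e_j)$ for its $j$-edge prefix. I would prove the following claim: after edge $e_j$ has been processed, and up to the end of Phase~1, $R_{w_j}$ contains a representative $l_j$ with $b(l_j)=s(P_j)$, $c(l_j)=c(P_j)$ and $a(l_j)\le a(P_j)$. The extra conditions on cost, starting time and arrival time are exactly what is needed to (i) conclude the lemma (such an $l_j$ is, by \Cref{definition:labeldom2}, equivalent to the label of $P_j$, hence represents it) and (ii) push the induction one edge further. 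The base case is the trivial prefix $P_0$ at $s$: the initial label $l_{init}=(0,0,0,-,s,-,\Pi_{l_{init}})$ is placed in $R_s$ before any edge is processed, and since no label at $s$ can have arrival time smaller than $0$, it is never predominated and persists.

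For the inductive step I would use the hypothesis for $P_{j-1}$ and argue three points in order. The first is \emph{persistence}: once a representative $l$ of the equivalence class of $P_{j-1}$ lies in $R_{w_{j-1}}$, it is never removed. Since \Cref{alg:alg2} deletes a label only when it is predominated, it suffices to observe that any such $l$ with $b(l)=s(P_{j-1})$, $c(l)=c(P_{j-1})$ and $a(l)\le a(P_{j-1})$ is a \prefixp{} of an efficient path: replacing the prefix $P_{j-1}$ in $P_{s,v}$ by the path of $l$ (which is feasible because $a(l)\le a(P_{j-1})\le t_{e_j}$, exactly the prefix exchange performed in the proof of \Cref{lemma:domination1}) yields a path with the same starting time, the same cost and the same arrival time as $P_{s,v}$, hence an efficient path of which $l$ is a prefix; \Cref{lemma:domination1} then forbids $l$ from being predominated. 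Because edges are processed by increasing availability time and $t_{e_{j-1}}+\lambda_{e_{j-1}}\le t_{e_j}$ forces $t_{e_{j-1}}<t_{e_j}$, the label $l_{j-1}$ is already present when $e_j$ is processed.

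The second and hardest point is \emph{selection}: I must show that $l_{j-1}$ actually enters the set $S$ in line~\ref{alg:alg_2:step:minset}, i.e.\ that among the labels of $R_{w_{j-1}}$ with starting time $s(P_{j-1})$ and arrival time at most $t_{e_j}$, the label $l_{j-1}$ has minimal cost. This is where efficiency of $P_{s,v}$ is used globally, and I expect it to be the main obstacle. If a label with the same starting time, arrival at most $t_{e_j}$, and strictly smaller cost existed, pushing it over $e_j$ would produce an $(s,w_j)$-path with starting time $s(P_j)$ and arrival $t_{e_j}+\lambda_{e_j}=a(P_j)$ but strictly smaller cost than $P_j$; prepending it to the suffix $(e_{j+1},\dots,e_k)$ of $P_{s,v}$ would dominate $P_{s,v}$, contradicting its efficiency. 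Hence $l_{j-1}\in S$.

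The third point is \emph{insertion}. Pushing $l_{j-1}$ over $e_j$ creates $l_{new}$ with $b(l_{new})=s(P_j)$, $c(l_{new})=c(P_j)$ and $a(l_{new})=a(P_j)$; for $j=1$ the $u=s$ branch fires and resets the starting time to $t_{e_j}=s(P_1)$, which I would check gives the same three equalities. When $l_{new}$ is inserted into $R_{w_j}$, the predomination branch cannot fire, since $l_{new}$ represents the prefix $P_j$ of the efficient path $P_{s,v}$ and is therefore not predominated by \Cref{lemma:domination1}. Thus $l_{new}$ either becomes a new representative of $R_{w_j}$, or it is merged into an equivalent class whose representative then has cost $c(P_j)$, starting time $s(P_j)$ and arrival time at most $a(P_j)$. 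Either way $R_{w_j}$ contains the required $l_j$, and by the persistence argument it survives until the end of Phase~1. Taking $w=w_i$ completes the induction and hence the lemma.
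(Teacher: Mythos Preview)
Your proposal is correct and follows essentially the same approach as the paper: induction on the length of the prefix-path, using efficiency of $P_{s,v}$ together with \Cref{lemma:domination1} to argue that at each step the relevant label is selected into $S$, survives the predomination check, and hence has a representative in $R_{w_j}$. Your explicit strengthened hypothesis $a(l_j)\le a(P_j)$ and the separate persistence argument make precise what the paper's proof treats more informally, but the structure and the key ideas are the same.
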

\begin{proof}
	We show that each \prefixp{} $P_0,P_1,\ldots,P_{k}$, with $P_0=P_{s,s}$ and $P_k=P_{s,v}$ 
	is represented by a label at the last vertex of each \prefixp{} by induction over the length $h$. 
	Note that all \prefixp{}s have the same starting time $b=s(P_{s,v})$.
	For $h=0$ we have $P_0=P_{s,s}$ and since $s$ does not have any incoming edges, the initial label $l_{init}=l_0$ representing $P_0$ is in $L_s$ after Phase 1 finishes.
	Assume the hypothesis is true for $h= i-1$ and consider the case for $h=i$ and the \prefixp{} $P_{i}=P_{s,v_{i+1}}=(e_1,\ldots,e_i=(v_i,v_{i+1},t_i,\lambda_i,c_i))$,
	which consists of the \prefixp{} $P_{i-1}=P_{s,v_{i}}=(e_1,\ldots,e_{i-1}=(v_{i-1},v_{i},t_{i-1},\lambda_{i-1},c_{i-1}))$ and edge $e_i=(v_i,v_{i+1},t_i,\lambda_i,c_i)$.
	Due to the induction hypothesis, we conclude that $L_{v_i}$ contains a label $l_{i-1}=(b,a_{i-1},c_{i-1},p_{i-1},v_i,r_{i-1},\Pi_{i-1})$ that represents $P_{i-1}$.
	Because $P_{i-1}$ is a \prefixp{} of $P_{s,v}$ the representing label $l_{i-1}$ must have the minimum cost in $L_{v_i}$ under all labels with starting time $b$ before edge $e_i$ arrives.
	Else, it would have been predominated and replaced by a cheaper one (\Cref{lemma:domination1}). %
	The set $S$ contains a label that represents $l_{i-1}$, because the representative of $\Pi_{i-1}$ has an arrival time less or equal to $a_{i-1}$.
	Therefore, the algorithm pushes $l_{new}=(b,t_i+\lambda_i,c_{i-1}+c_i,l_{i-1},v_{i+1},t_i,\cdot)$ over edge $e_i$. 
	If $R_{v_{i+1}}$ is empty the label $l_{new}$ gets inserted into $R_{v_{i+1}}$ and $\Pi_{l_{new}}$.
	Otherwise we have to check for predomination and equivalence with every label $l'=(b',a',c',p',v_{i+1},r',\Pi')\in R_{v_{i+1}}$. There are the following cases:
	\begin{enumerate}
		\item $l_{new}$ predominates $l'$: We can remove $l'$ from $R_{v_{i+1}}$ because it will never be part of an efficient path (\Cref{lemma:domination1}). The same is true for each label in $\Pi'$ and therefore we delete $\Pi'$. 
		However, we keep $l_{new}$ and continue with the next label.
		\item $l_{new}$ and $l'$ are equivalent: We add $l_{new}$ to $\Pi'$. In this case we represent the path $P_i$ by the representative of $\Pi'$.
		Consequently, the path is represented by a label in $L_{v_{i+1}}$.
	\end{enumerate}	
	If neither of these two cases apply for any label in $R_{v_{i+1}}$, we add $l_{new}$ to $R_{v_{i+1}}$ and to $\Pi_{l_{new}}$. 
	The case that a label $l$ is not equivalent to $l_{new}$ and predominates $l_{new}$ cannot be for the following reason. 
	If $l$ predominates $l_{new}$, there is a path $P'$ from $s$ to $v_{i+1}$ with less costs or later starting time (because $l$ and $l_{new}$ are not equivalent) and a not later arrival time. 
	Replacing the \prefixp{} $P_{i}$ with $P'$ in the path $P_{s,v}$ would lead to an $(s,v)$-path with less costs and/or shorter duration. 
	This contradicts our assumption that $P_{s,v}$ is efficient.
	Therefore, after Phase 1 finished, the label $l_{new}$ representing the \prefixp{} $P_{i}$ is in $L_{v_i}$.
	It follows that if $P_{s,v}=(e_1,\ldots,e_k)$ is an efficient path, then after Phase 1 the set $R_v$ contains a label representing $P_{s,v}$ 
	(possibly, such that a label in $R_v$ represents a list of equivalent labels, that contains the label representing $P_{s,v}$).
\end{proof}
After all edges have been processed, the algorithm continues with Phase 2. 
First, the algorithm marks all nondominated labels in $R_z$. For each marked label $l$ the algorithm iterates over the list of equivalent labels $\Pi_l$ and calls the output procedure for each label in $\Pi_l$. 
We show that all and only efficient paths are enumerated.
\begin{theorem}
	Let $G=(V,E)$ be a weighted temporal graph with strictly positive edge costs and $s,z\in V$ an instance of \probbenum{}. Algorithm \ref{alg:alg2} outputs exactly all efficient $(s,z)$-paths. %
\end{theorem}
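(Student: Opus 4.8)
The plan is to establish the two inclusions separately: every efficient $(s,z)$-path appears in the output (completeness), and every path emitted by \textsc{OutputPaths} is efficient (soundness); I would then add a short argument that no path is produced twice, so that the output set equals the set of efficient paths exactly. Throughout I use \Cref{observation:posweightsimple} to identify efficient walks with efficient simple paths, so that enumerating efficient paths coincides with enumerating nondominated solutions.

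For completeness I would start from \Cref{lemma:effpathlables_alg2}, which already guarantees that after Phase~1 the set $R_z$ contains a representative $l$ whose equivalence list $\Pi_l$ holds a label $\ell_P$ representing a given efficient path $P$. The first thing to verify is that Phase~2 actually launches \textsc{OutputPaths} from $\ell_P$. This needs two observations: (i) $l$ is marked in \Cref{alg:step:dom_2}, because if some representative in $R_z$ dominated $l$ in the sense of \Cref{definition:labeldom2}, its (feasible) $(s,z)$-path would dominate $P$, contradicting efficiency; and (ii) $\ell_P$ is a minimal-arrival member of $\Pi_l$, since all members share the cost $c(P)$ and the starting time $s(P)$, so a strictly earlier arrival would give a strictly shorter duration at equal cost and again dominate $P$. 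Once \textsc{OutputPaths}$(\ell_P,\cdot)$ is invoked, I would argue by induction along $P$ from $z$ back to $s$ that the recursion reconstructs exactly $P$: at each step the predecessor pointer leads to the equivalence class of the label representing the corresponding \prefixp{}, and the temporal feasibility test $a_{l'}\le r$ in \Cref{alg:alg_earliest:step:feasiblepath} is passed by the label of the correct \prefixp{} precisely because $P$ itself satisfies $t_i+\lambda_i\le t_{i+1}$; the recursion bottoms out at $l_{init}$ and emits $P$.

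For soundness I would take an arbitrary emitted path $P$ and show first that it is a genuine temporal $(s,z)$-path: the recursion is seeded at a label of $R_z$ and halts only at a label without predecessor (i.e.\ at $s$), while the check $a_{l'}\le r$ forces every consecutive pair of prepended edges to respect $t_i+\lambda_i\le t_{i+1}$. Next I would show that $P$ inherits the cost and duration of its seed label $l\in R_z$: all labels of one equivalence class share the same cost, so the prepended edge costs telescope to $c(l)$; and all share the same starting time, so the starting time of $P$ equals that of $l$ while its arrival time is fixed to $a(l)$, giving duration $a(l)-s(l)$. Since $l$ was marked nondominated, the point $(a(l)-s(l),c(l))$ is nondominated in $\mathcal{Y}_F$, hence $P$ is efficient. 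The key subtlety to spell out is that branching over \emph{all} members of an intermediate class (not only the minimal-arrival ones) is harmless: the final duration depends only on the fixed seed arrival and the common starting time, so every temporally feasible branch yields the same nondominated objective vector.

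I expect the main obstacle to lie in the Phase~2 analysis rather than Phase~1. The difficulty is exactly the phenomenon illustrated in \Cref{fig:counter_example_optimality_principle_01}: a \prefixp{} of an efficient path need not itself be efficient, so the algorithm keeps predomination-surviving representatives (justified by \Cref{lemma:domination1}) together with their equivalence lists and recombines them only during backtracking. I therefore have to argue carefully that the interplay of (a) seeding only from marked, minimal-arrival labels at $z$, (b) branching over entire equivalence classes of predecessors, and (c) filtering by $a_{l'}\le r$ reproduces precisely the efficient paths — producing no infeasible or dominated path and omitting none. Finally, to justify ``exactly'' I would show the reconstruction is injective: distinct sequences of branch choices yield distinct edge sequences because the underlying chains of predecessor labels differ, and each efficient path corresponds to a unique seed label and a unique branch, so it is output once.
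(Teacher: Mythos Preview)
Your proposal is correct and follows essentially the same structure as the paper's proof: completeness via \Cref{lemma:effpathlables_alg2} together with a backward induction along the \suffixp{}s, soundness by arguing that every seed carries a nondominated objective vector so that any temporally feasible reconstruction is efficient (the paper phrases this by contradiction rather than directly), and \Cref{observation:posweightsimple} to conclude simplicity. Your explicit injectivity argument for ``no duplicates'' is a welcome addition that the paper leaves implicit.
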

\begin{proof}
	\Cref{lemma:effpathlables_alg2} implies that for each efficient path $P_{s,z}$ there is a corresponding representative label in $R_z$ after Phase 1 is finished. 
	Note that there might also be labels in $L_z$ that do not represent efficient paths. 
	First, we mark all nondominated labels in $R_z$.
	For every marked representative $l'=(b,a,c,p,z,r,\Pi_{l'})$ in $R_z$ we proceed by calling the output procedure for all labels $l\in \Pi_{l'}$ with minimal arrival time.
	Each such label $l$ represents at least one efficient $(s,z)$-path and we call the output procedure with $l$ and the empty path $P$. 
	Let path $Q=(e_1\ldots,e_k)$ be an efficient $(s,z)$-paths represented by $l$. 
	We show that the output procedure successively constructs the \suffixp{}s $P_{k-i+1}=(e_{k-i+1},\ldots,e_{k})$ of $Q$ for $i\in\{1,\ldots,k\}$ and finally outputs $Q=P_1=(e_1,\ldots,e_{k})$. 

	We use induction over the length $i\geq 1$ of the \suffixp{}. For $i=1$ the statement is true. 
	$P_k=(e_k)$ is constructed by the first instruction which prepends the last edge of $Q$ to the initially empty path. %
	Assume the statement holds for a fixed $i<k$, i.e., the \suffixp{} $P_{k-i+1}$ of $Q$ with length $i$ has been constructed, by calling the output procedure with $P_{k-i+2}$ and label $\tilde{l}=(b,a,c,p,v_{k-i+1},r,\Pi)$. 
	Now, for $i+1$, the \suffixp{} $P_{k-i}=(e_{k-i},\ldots,e_{k})$ equals suffix-path $P_{k-i+1}$ with the additional edge $e_{k-i}=(v_{k-i},v_{k-i+1},t,\lambda,c)$ prepended and where $v_{k-i+1}$ is the first vertex of $P_{k-i+1}$.
	If label $\tilde{l}$ has a predecessor $p=(b,a_p,c_p,p_p,v_{k-i},r_p,\Pi')$,
	the algorithm recursively calls the output procedure for each label in the list of equivalent labels $\Pi'$ %
	that has an arrival time less or equal to the time $r$ of the edge that lead to the creation of $\tilde{l}$.
	Due to \Cref{lemma:effpathlables_alg2} and the induction hypothesis, the algorithm particularly calls the output procedure for the label $l'$ that represents the beginning of the \suffixp{} $P_{k-i}$.
	Because $Q$ is a temporal path the arrival time at $v_{k-i}$ is less or equal to the time of the edge that lead to the label $\tilde{l}$. 
	Consequently, there is a call of the output procedure that constructs $P_{k-i}$.
	If label $\tilde{l}$ does not have a predecessor, the algorithm arrived at vertex $s$ and the algorithm outputs the complete path $Q=P_1$.
	Therefore, all efficient paths are enumerated.
		
	We still have to show that only efficient paths are enumerated.	In order to enumerate a non-efficient $(s,z)$-path $Q'$, there has to be a label $l_q$ in $L_z$ for which the output procedure is called and which represents $Q'$.
	For $Q'$ to be non-efficient there has to be at least one label $l_d$ in $L_z$ that dominates $l_q$. 
	In line~\ref{alg:step:dom_2} the algorithm marks all nondominated labels in $R_z$. This implies that $l_d$ and $l_q$ have the same cost and starting times and that they are in the same list, let this list be $\Pi_x$ for some label $x\in R_z$. 
	Because $l_q$ is dominated by $l_d$ the arrival time of $l_d$ is strictly earlier than the the arrival time of $l_q$.
	However, we call the output procedure only for the labels in $\Pi_x$ with the minimal arrival time.
	Consequently, it is impossible that the non-efficient path $Q'$ is enumerated.
	Finally, because all edge costs are strictly positive and due to \Cref{observation:posweightsimple} only paths are enumerated. 
\end{proof}
\paragraph{Example:} \Cref{fig:example_alg2} shows an example for \Cref{alg:alg2} at the end of Phase~1. The indices of the edges are according to the position in the sequence of the edge stream.
The representative labels at the vertices only show the starting time, arrival time and cost.
The lists $\Pi$ of equivalent labels are not shown. All of them contain only the representative, with exception of $\Pi_l$ represented by label $l$ in $R_w$.
The list $\Pi_l$ contains label $l=(3,7,4)^T$ representing path $((s,w,3,4,4))$ and the equivalent label $(3,8,4)^T$ representing path $((s,u,3,3,3),(u,w,6,1,1))$. 
There are three efficient paths. Starting the output procedure from vertex $z$ with the label $(7,10,6)^T$ yields path $(e_5,e_8)$,
and starting with label $(3,9,5)^T$ yields the two paths $(e_1,e_4,e_7)$ and $(e_2,e_7)$.
Notice that label $(7,9,2)^T$ in $R_w$ which dominates label $(3,7,4)^T$, is not part of an efficient $(s,z)$-path, due to its late arrival time.
\begin{figure}
	\centering
	\scalebox{0.94}{
	\begin{tikzpicture}
	\node[vertex] (s) at (1,0) {s};
	\node[vertex] (u) at (4.5,0) {u};
	\node[vertex] (v) at (6,4) {v};
	\node[vertex] (w) at (7.5,0) {w};
	\node[vertex] (z) at (11,0) {z};			
	
	\node at (0.8,0.9) {$R_s=\left[\tiny\begin{pmatrix}0\\0\\0\end{pmatrix}\right]$};
	\node at (4,-0.7) {$\tiny R_u=\left[\begin{pmatrix}3\\6\\3\end{pmatrix}\right]$};	
	\node at (4.5,4) {$\tiny R_v=\left[\begin{pmatrix}7\\8\\1\end{pmatrix}\right]$};
	\node at (8.8,-0.7) {$\tiny R_w=\left[l=\begin{pmatrix}3\\7\\4\end{pmatrix},\begin{pmatrix}7\\9\\2\end{pmatrix}\right]$};
	\node at (11.5,0.8) {$\tiny R_z=\left[\begin{pmatrix}7\\10\\6\end{pmatrix},\tiny\begin{pmatrix}3\\9\\5\end{pmatrix}\right]$};	
	
	\path[->] 
	(s)  edge[dedge]  node[midway,above,sloped] {$e_1=(3,3,3)$}   (u)
	(s)  edge[dedge]  node[midway,above,sloped] {$e_5=(7,1,1)$}   (v)
	(s)  edge[dedge,bend right=50]  node[midway,below,sloped] {$e_2=(3,4,4)$}   (w)	
	(u)  edge[dedge]  node[midway,above,sloped] {$e_3=(6,3,1)$}   (v)
	(u)  edge[dedge]  node[midway,above,sloped] {$e_4=(6,2,1)$}   (w)
	(v)  edge[dedge]  node[midway,above,sloped] {$e_6=(8,1,1)$}   (w)
	(v)  edge[dedge]  node[midway,above,sloped] {$e_8=(9,1,5)$}   (z)
	(w)  edge[dedge]  node[midway,above,sloped] {$e_7=(8,1,1)$}   (z);	
	\end{tikzpicture}}
	\caption{Example for \Cref{alg:alg2}. Each vertex is annotated with the representatives after Phase 1 finished.} %
	\label{fig:example_alg2}
\end{figure}
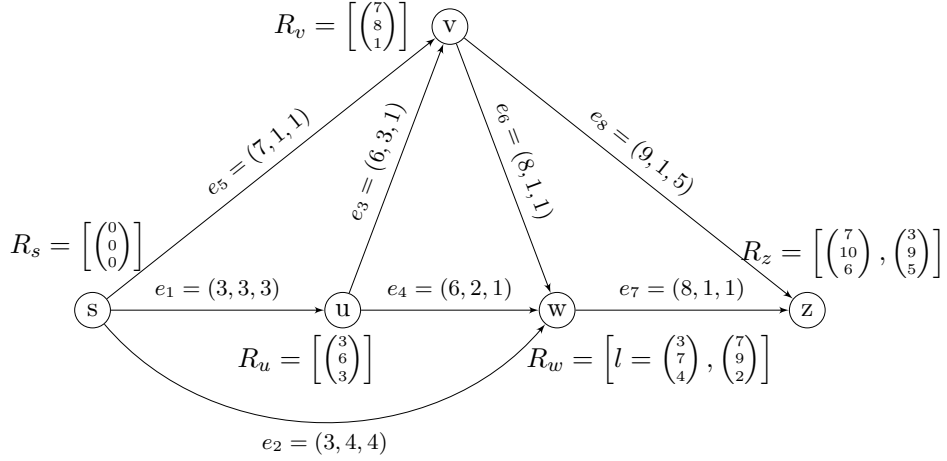

\begin{lemma}\label{lemma:alg2_timephase1}
	Phase 1 of \Cref{alg:alg2} has a time complexity of $\mathcal{O}(\mathcal{S}\cdot m^2)$.
\end{lemma}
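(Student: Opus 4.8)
The plan is to read the running time of Phase~1 directly off the control flow of \Cref{alg:alg2} and then bound each contribution by a combinatorial estimate on the sizes of the representative sets. The outer loop (line~\ref{alg:alg_2:mainloop}) is executed exactly once per edge, so there are $m$ iterations. Within the iteration for an edge $e=(u,v,t_e,\lambda_e,c_e)$ two things happen: the candidate set $S\subseteq R_u$ is assembled (line~\ref{alg:alg_2:step:minset}), and then every label of $S$ is pushed over $e$, each push being followed by a scan of $R_v$ to test predomination and equivalence in the sense of \Cref{definition:labeldom2}. Since every individual predomination/equivalence test and every insertion or deletion is $\mathcal{O}(1)$, the cost of one iteration is $\mathcal{O}(|R_u|)$ for forming $S$ plus $\mathcal{O}(|S|\cdot|R_v|)$ for the pushes, and the whole analysis reduces to bounding $|S|$, $|R_u|$ and $|R_v|$.

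For $|S|$ I would use that its labels carry pairwise distinct starting times, so $|S|\le\mathcal{S}$. For the representative sets I would argue, as in \Cref{lemma:domination1}, that $R_v$ is by construction an antichain under predomination in which each equivalence class (i.e.\ each pair of starting time and cost) occurs once. An $(s,v)$-path can arrive at $v$ only at one of the $\le|\delta^-(v)|$ arrival times $t+\lambda$ of incoming edges, the starting times take at most $\mathcal{S}$ values, and for a fixed (starting time, arrival time) pair predomination leaves exactly one survivor; hence $|R_v|=\mathcal{O}(\mathcal{S}\cdot|\delta^-(v)|)$, which together with $\sum_v|\delta^-(v)|=m$ yields the global budget $\sum_{v}|R_v|=\mathcal{O}(\mathcal{S}\,m)$. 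This is the intermediate-vertex analogue of the $\mathcal{O}(\mathcal{S}\,m)$ nondominated-point count established at the target in \Cref{lemma:prop1}.

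Combining the pieces, the total time is $\sum_{e=(u,v)}(|R_u|+|S|\cdot|R_v|)$. The first part sums to $\sum_u|\delta^+(u)|\,|R_u|\le(\max_u|R_u|)\cdot\sum_u|\delta^+(u)|=\mathcal{O}(\mathcal{S} m)\cdot m$, which is $\mathcal{O}(\mathcal{S}\,m^2)$. For the push part I would charge each edge by $|S|\cdot|R_v|\le\mathcal{S}\cdot|R_v|$; summing over the $\le|\delta^-(v)|$ edges entering $v$ and then over $v$, and using $\max_v|\delta^-(v)|\le m$ together with the global budget $\sum_v|R_v|=\mathcal{O}(\mathcal{S} m)$, the target bound $\mathcal{O}(\mathcal{S}\,m^2)$ is reached once a single scan of a live set $R_v$ can be charged as $\mathcal{O}(m)$ rather than $\mathcal{O}(\mathcal{S} m)$.

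I expect that last point to be the genuinely delicate step. A naive estimate multiplies $|S|=\mathcal{O}(\mathcal{S})$ by $|R_v|=\mathcal{O}(\mathcal{S} m)$ for each of the $m$ edges and loses an extra factor of $\mathcal{S}$, landing at $\mathcal{O}(\mathcal{S}^2 m^2)$; the compute-$S$ term amortizes cleanly but the push term does not, because the leading $\mathcal{S}$ from $|S|$ and the $\mathcal{S}$ hidden in $|R_v|$ compound. To reach $\mathcal{O}(\mathcal{S}\,m^2)$ one therefore has to show that only $\mathcal{O}(m)$ representatives are simultaneously present at any single vertex, so that one scan costs $\mathcal{O}(m)$ and each edge contributes $\mathcal{O}(\mathcal{S} m)$. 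This is exactly where care is needed, since the predomination antichain of \Cref{definition:labeldom2} can in principle hold $\Theta(\mathcal{S}\cdot|\delta^-(v)|)$ incomparable labels (distinct starting-time/cost pairs sharing an arrival time are pairwise incomparable), so the $\mathcal{O}(m)$ per-vertex bound must come from additional structure of the fastest-path instances rather than from non-predomination alone. Alongside this, I would verify the remaining bookkeeping: that equivalence-class membership and the ``minimal cost per distinct starting time'' selection defining $S$ can be realized in $\mathcal{O}(1)$ amortized time per label without an extra logarithmic or linear overhead.
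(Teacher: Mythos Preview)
Your decomposition is exactly the one the paper uses: outer loop over the $m$ edges, per-edge cost split into assembling $S$ from $R_u$ and then pushing each label of $S$ against $R_v$. The paper's proof is far terser than yours, however. It simply asserts that building $S$ takes $\mathcal{O}(m)$ time and that the predomination/equivalence scan over $R_v$ costs $\mathcal{O}(|S|\cdot m)$ in total per edge, then multiplies by the $m$ edges and invokes $|S|\le\mathcal{S}$ to conclude $\mathcal{O}(\mathcal{S}\,m^2)$. In other words, the paper treats $|R_u|,|R_v|=\mathcal{O}(m)$ as self-evident and does not supply the ``additional structure'' argument you flag as the delicate step.

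Your worry is therefore more careful than the paper, not less. As you observe, a non-predominated antichain at a single vertex can in principle contain $\Theta(\mathcal{S}\cdot|\delta^-(v)|)$ incomparable representatives (distinct starting-time/cost pairs sharing an arrival time are pairwise incomparable under \Cref{definition:labeldom2}), and the paper offers no argument that this collapses to $\mathcal{O}(m)$ per vertex. \Cref{lemma:space_labels_alg2}, proved immediately afterwards, only bounds the \emph{global} label count by $\mathcal{S}\,m+1$, which yields $|R_v|=\mathcal{O}(\mathcal{S}\,m)$ at a single vertex, not $\mathcal{O}(m)$. So the step you single out is precisely the one the paper waves through: the naive accounting you describe gives $\mathcal{O}(\mathcal{S}^2 m^2)$, and the claimed $\mathcal{O}(\mathcal{S}\,m^2)$ bound is asserted rather than derived. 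You are not missing an idea that the paper supplies.
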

\begin{proof}
	The outer loop iterates over $m$ edges.
	For each edge $e=(u,v,t_e,\lambda_e,c_e)$ we have to find the set $S\subseteq R_u$ consisting of all labels with minimal cost, distinct starting times and arrival time less or equal to $t_e$ (see line \ref{alg:alg_2:step:minset}). This can be done in $\mathcal{O}(m)$ time.
	For each label in $S$ we have to check for predominance or equivalence with each label in $R_v$ in $\mathcal{O}(S\cdot m)$ total time.
	Since we have $|S|\leq \mathcal{S}$, we get a total time of $\mathcal{O}(\mathcal{S}\cdot m^2)$.
\end{proof}
The following lemma shows that the number of labels is polynomially bounded in the size of the input.
\begin{lemma}\label{lemma:space_labels_alg2}
	The total number of labels generated and hold at the vertices in \Cref{alg:alg2} is less than or equal than $\mathcal{S}\cdot m+1$.
\end{lemma}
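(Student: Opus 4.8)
The plan is to bound the number of labels not by analysing which labels survive in the sets $R_v$ and the equivalence lists $\Pi_l$, but simply by counting how often a label is \emph{created}. Every label handled by \Cref{alg:alg2} is either the single initial label $l_{init}$ inserted into $R_s$, or is produced by one of the push operations in the main loop (the construction of $l_{new}$). Hence it suffices to count, over all $m$ iterations of the outer loop, how many labels $l_{new}$ can be created and then add one for $l_{init}$.

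For a fixed edge $e=(u,v,t_e,\lambda_e,c_e)$, the algorithm creates exactly one label $l_{new}$ per element of the set $S\subseteq R_u$ computed in line~\ref{alg:alg_2:step:minset}. By its definition, $S$ retains, for each distinct starting time $b$ occurring among the labels of $R_u$ with arrival time at most $t_e$, the single label of minimal cost; therefore $|S|$ equals the number of distinct such starting times. The hard part is to argue this is at most $\mathcal{S}$: every label in $R_u$ represents an $(s,u)$-path $P$, and its starting-time entry $b=s(P)$ is the availability time of the first edge of $P$. Since $P$ emanates from $s$, that first edge lies in $\delta^+(s)$, so $b$ is one of the availability times of edges leaving $s$, of which there are exactly $\mathcal{S}$ distinct values by definition. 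Consequently $|S|\le\mathcal{S}$, which is precisely the bound already invoked in \Cref{lemma:alg2_timephase1}.

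Combining the two observations, each of the $m$ edges triggers at most $\mathcal{S}$ label creations, so at most $\mathcal{S}\cdot m$ labels are created in the main loop; adding $l_{init}$ yields at most $\mathcal{S}\cdot m+1$ labels created over the whole execution. The labels actually held at the vertices at any moment form a subset of $\bigcup_{v\in V} L_v$, and every such label was created at some point, so the set of labels ever held is contained in the set of labels ever created. The claimed bound of $\mathcal{S}\cdot m+1$ therefore follows. I expect the only genuine content to be the distinct-starting-times argument of the second paragraph; the rest is an accounting argument that charges each created label to the edge whose processing produced it, together with the initial label.
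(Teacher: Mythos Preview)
Your proof is correct and follows essentially the same approach as the paper: charge each created label to the edge whose processing produced it, bound the number of labels created per edge by $|S|\le\mathcal{S}$, and add one for $l_{init}$. The paper's proof simply asserts $|S|\le\mathcal{S}$ without justification, whereas you supply the missing explanation (starting times of labels are availability times of edges in $\delta^+(s)$); apart from this added detail, the arguments are identical.
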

\begin{proof}
	We need one initial label $l_{init}$.
	For each incoming edge $e=(u,v,t,\lambda,c)$ in the edge stream we generate at most $|S|\leq \mathcal{S}$ new labels which we push over $e$ to vertex $v$.
	Therefore, we generate at most $\mathcal{S}\cdot m+1$ labels in total.
\end{proof}
\begin{theorem}
	\probbenumsp{}  $\in$ \psdelayp.
\end{theorem}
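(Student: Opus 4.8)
The plan is to verify the two defining conditions of \psdelayp{} for \Cref{alg:alg2}: polynomially bounded space, and polynomial time delay between consecutive outputs. The space bound is the more routine part and follows almost directly from the counting lemmas already established, whereas the delay bound is where the real work lies, since Phase~2 is a backtracking procedure that could, a priori, spend superpolynomial time exploring fruitless branches between two successful outputs.

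For the space bound I would argue as follows. By \Cref{lemma:space_labels_alg2} the total number of labels created and retained during Phase~1 is at most $\mathcal{S}\cdot m+1$, and each label together with its entry in an equivalence list $\Pi$ occupies constant space; hence the label data structure occupies $\mathcal{O}(\mathcal{S}\cdot m)$ space. Phase~2 adds only the recursion stack of \textsc{OutputPaths} together with the partial path $P$ currently under construction. Since all edge costs are strictly positive, \Cref{observation:posweightsimple} guarantees that every enumerated path is simple and hence uses at most $n-1$ edges; as each recursive call prepends exactly one edge, both the recursion depth and the length of $P$ are $\mathcal{O}(n)$. Summing, the algorithm runs in $\mathcal{O}(\mathcal{S}\cdot m)$ space, which is polynomial in the input size.

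For the delay bound I would proceed in two steps. First, the time until the first output is at most the running time of Phase~1, which is $\mathcal{O}(\mathcal{S}\cdot m^2)$ by \Cref{lemma:alg2_timephase1}, plus the cost of marking the nondominated labels in $R_z$ and of the first top-to-bottom descent of \textsc{OutputPaths}; all of these are polynomial. Second, and this is the crux, I would model Phase~2 as a depth-first traversal of the recursion tree whose leaves are exactly the calls on predecessor-free labels (at $s$) that emit a path. Three facts then yield polynomial delay: (iii) the traversal has \emph{no dead ends}, i.e.\ every recursive call eventually produces at least one output; (i) once (iii) is known, the depth of the tree is $\mathcal{O}(n)$, because every root-to-leaf branch builds a suffix-path of an efficient---hence simple (\Cref{observation:posweightsimple})---$(s,z)$-path; and (ii) the work at each node is $\mathcal{O}(\mathcal{S}\cdot m)$, namely one pass over the equivalence list $\Pi_p$ of the predecessor testing the feasibility condition $a_{l'}\le r$. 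Given (i)--(iii), between two consecutive outputs the traversal backtracks up $\mathcal{O}(n)$ levels and descends again to the next leaf without ever entering a subtree lacking a leaf, so the elapsed work is bounded by $\mathcal{O}(n)$ node visits times $\mathcal{O}(\mathcal{S}\cdot m)$ each, i.e.\ $\mathcal{O}(n\cdot\mathcal{S}\cdot m)$, which is polynomial.

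The main obstacle is establishing the no-dead-ends property (iii). Here I would argue that whenever \textsc{OutputPaths} is invoked on a label $\tilde{l}$ carrying a predecessor pointer $p$, the predecessor $p$ itself lies in the referenced list $\Pi_p$ and passes the feasibility test: $\tilde{l}$ was created in Phase~1 by pushing $p$ over some edge $e$, so its stored value $r$ equals the availability time of $e$, and membership of $p$ in the push-set $S$ forces $a_p\le r$. Since equivalence lists are only appended to (a representative may be swapped out of $R_v$, but the old label is retained in its list), $p$ survives as a feasible continuation, so at least one recursive call is always made; by induction on the bounded recursion depth every call reaches a predecessor-free label at $s$ and emits a path. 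Together with \Cref{lemma:effpathlables_alg2} and the correctness theorem above---which guarantee that the labels reached correspond to genuine prefix-paths of efficient paths---this shows no branch is wasted. The delicate point to check is that the list $\Pi_p$ referenced at recursion time is never emptied by a predomination step that deletes a whole list; I would rule this out by observing that such a deletion removes only labels that cannot be prefix-paths of efficient paths (\Cref{lemma:domination1}), and hence cannot lie on any branch generated from the marked starting labels of Phase~2.
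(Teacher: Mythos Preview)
Your proposal is correct and follows essentially the same approach as the paper, invoking \Cref{lemma:alg2_timephase1} and \Cref{lemma:space_labels_alg2} for the time and space bounds and then arguing polynomial delay for the backtracking of Phase~2. Your treatment of Phase~2 is in fact more careful than the paper's---which simply asserts that each call to the output procedure ``visits at most $\mathcal{O}(m^2)$ labels and outputs at least one path''---since you make the no-dead-ends property explicit and justify it via \Cref{lemma:domination1} and the invariant that every label reached during Phase~2 represents a \prefixp{} of an efficient path.
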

\begin{proof}
	Phase 1 takes only polynomial time in size of the input, i.e., number of edges (\Cref{lemma:alg2_timephase1}). 
	In Phase 2 of \Cref{alg:alg2}, we first %
	find and mark all nondominated labels in $R_z$ in $\mathcal{O}(m^2)$ time. %
	For each nondominated label, we call the output procedure which visits at most $\mathcal{O}(m^2)$ labels and outputs at least one path.
	It follows that the time between outputting two consecutively processed paths is also bounded by $\mathcal{O}(m^2)$.
	Therefore, \probbenumsp{} is in \delayp{}.
	The space complexity is dominated by the number of labels we have to manage throughout the algorithm. Due to \Cref{lemma:space_labels_alg2}, the number of labels is in $\mathcal{O}(m^2)$.
	Consequently, \probbenumsp{} is in \psdelayp{}.
\end{proof}
Notice that if we allow zero-weighted edges the algorithm enumerates walks. 
However, removing zero-weighted cycles to obtain only paths could repeatedly lead to same path, such that we would not be able to guarantee that the time between outputting two successive efficient paths is polynomially bounded. However, the following problems are easy to decide, even if we allow zero weighted edges.
\begin{theorem}\label{theorem:algphase2poly}
	Given a weighted temporal graph $G=(V,E)$, $s,z\in V$ and
	\begin{enumerate}
		\item an $(s,z)$-path $P$, deciding if $P$ is efficient for \probbenum{}, or
		\item $c\in \mathbb{R}_{\geq 0}$ and $d\in \mathbb{N}$, deciding if there exists an $(s,z)$-path $P$ with $d(P)\leq d$ and $c(P)\leq c$
		is possible in polynomial time. 
	\end{enumerate}		
\end{theorem}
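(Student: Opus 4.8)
The plan is to reduce both decision problems to a single polynomial-time computation: the Pareto frontier of $(\text{duration},\text{cost})$ values attainable by $(s,z)$-walks, read off at the target $z$. The first step is to argue that one may safely work with walks instead of paths. Since all edge costs are nonnegative, if a temporal walk revisits a vertex $w$, the subwalk between the two visits of $w$ forms a temporal cycle that can be deleted; the resulting walk keeps the same first edge and the same suffix to $z$, hence has the same starting time and the same arrival time, and therefore the same duration, while its cost does not increase. Consequently, for every duration value the minimum cost over walks equals the minimum cost over paths, and the existence of a feasible walk implies the existence of a feasible path with the same duration and no larger cost. This is precisely why the decision versions remain tractable in the presence of zero-weighted edges, whereas enumeration does not.

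Next I would compute the frontier by a label-setting computation in the spirit of Phase~1 of \Cref{alg:alg2}, processing the edges in order of availability time and maintaining at each vertex $v$ the minimum cost of a walk from $s$ to $v$ for each distinct pair $(\text{starting time},\text{arrival time})$. When an edge $e=(u,v,t,\lambda,c_e)$ is processed, every label at $u$ with arrival time at most $t$ is extended to a label at $v$ with arrival time $t+\lambda$ and cost increased by $c_e$, keeping only the cheapest cost per $(\text{start},\text{arrival})$ pair. By the counting argument behind \Cref{lemma:prop1}, the number of distinct starting times is at most $\mathcal{S}$ and the number of distinct arrival times at any vertex is $\mathcal{O}(m)$, so each vertex carries $\mathcal{O}(\mathcal{S}\cdot m)=\mathcal{O}(m^2)$ labels and the whole computation runs in polynomial time, with no assumption on the sign of the costs. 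From the labels at $z$ I then obtain, for every attainable duration, the minimum cost of an $(s,z)$-walk, equivalently path; I denote this $(\text{duration},\text{cost})$ frontier by $F_z$.

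Part~(2) is then immediate: there is an $(s,z)$-path with $d(P)\le d$ and $c(P)\le c$ if and only if $F_z$ contains a point of duration at most $d$ and cost at most $c$, which is checked in time linear in $|F_z|$. For part~(1) I set $(d^{\ast},c^{\ast})=(d(P),c(P))$ for the given path $P$ and let $C_{\le d^{\ast}}$ and $C_{<d^{\ast}}$ be the minimum cost of an $(s,z)$-path of duration at most $d^{\ast}$, respectively strictly less than $d^{\ast}$; both are read off from $F_z$ by taking the cheapest frontier point in the respective duration range. Unfolding the definition of domination for \probbenum{}, $P$ fails to be efficient exactly when some path beats it on one criterion without losing on the other, i.e.\ when there is a path with duration $\le d^{\ast}$ and cost $<c^{\ast}$, or one with duration $<d^{\ast}$ and cost $\le c^{\ast}$. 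Hence $P$ is efficient if and only if $C_{\le d^{\ast}}\ge c^{\ast}$ and $C_{<d^{\ast}}> c^{\ast}$ (note $C_{\le d^{\ast}}\le c^{\ast}$ always, since $P$ itself is a candidate, so the first condition in fact asserts $C_{\le d^{\ast}}=c^{\ast}$), and both inequalities are decided in polynomial time.

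The main obstacle I anticipate is not the dynamic program, which is routine, but justifying the two reductions carefully. First, I must make the passage from paths to walks rigorous when zero-cost edges are allowed, showing via the cycle-deletion argument that it preserves both the duration and the cost frontier. Second, the efficiency test must translate faithfully into the two threshold queries, which requires getting the strict versus non-strict inequalities exactly right, so that neither an equal-cost strictly faster path nor a strictly cheaper not-slower path is overlooked.
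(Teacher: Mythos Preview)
Your approach is essentially the paper's: run Phase~1 of \Cref{alg:alg2} (or an equivalent label-setting pass) to obtain the nondominated $(\text{duration},\text{cost})$ points at $z$, use the nonnegativity of costs to pass between walks and paths, and then answer both questions by comparing against that frontier. The paper phrases part~(1) as ``check whether the cost vector of $P$ lies in $\mathcal{N}$,'' which is equivalent to your two threshold tests $C_{\le d^{\ast}}\ge c^{\ast}$ and $C_{<d^{\ast}}>c^{\ast}$.

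One imprecision is worth fixing. Your cycle-deletion step asserts that removing a cycle leaves the first edge and the suffix to $z$ unchanged, hence preserves the duration exactly. This fails when the repeated vertex is $s$ (the prefix up to the second visit of $s$ is deleted, so the first edge changes and the starting time moves later) or $z$ (the suffix after the first visit of $z$ is truncated, so the arrival time moves earlier); in either case the duration can strictly decrease. Consequently your intermediate claim ``for every duration value the minimum cost over walks equals the minimum cost over paths'' is not literally true. What \emph{does} hold is that cycle deletion yields a path with duration and cost each at most those of the walk, and that weaker monotonicity is exactly what your threshold queries need: it forces $C_{\le d^{\ast}}$ and $C_{<d^{\ast}}$ to agree over walks and over paths, and makes the Pareto sets coincide. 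With that correction the argument goes through unchanged.
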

\begin{proof}
	We use Phase 1 of \Cref{alg:alg2} and calculate the set $\mathcal{N}\subseteq \mathcal{Y}$ of nondominated points.
	Due to the possibility of edges with cost $0$, there may be non-simple paths, i.e., walks, that have zero-weighted cycles. 
	Nonetheless, Phase 1 terminates after processing the $m$ edges.
	If there exists an efficient $(s,z)$-walk $W$ with $(c(W),a(W))$, then there also exists a simple and efficient $(s,z)$-path $Q$ with the same cost vector. $Q$ is the same path as $W$ but without the zero-weighted cycles.
	In order to decide if the given path $P$ is efficient, we first calculate the cost vector $(c(P),a(P))$, and then validate if $(c(P),a(P))\in \mathcal{N}$. 
	For \emph{2.}, we only need to compare $(c,d)$ to the points in $\mathcal{N}$.
	The size of $\mathcal{N}$ is polynomially bounded (\Cref{lemma:prop1}). Phase 1 and calculating the cost of $P$ takes polynomial time.
\end{proof}
We can find a maximal set of efficient paths with pairwise different cost vectors in polynomial time.
\begin{corollary}\label{corollary:noneffpoly}
	Given a temporal graph $G=(V,E)$ and $s,z\in V$, a 
	maximal set of efficient $(s,z)$-paths with pairwise different cost vectors for \probbenum{}
	can be found in $\mathcal{O}(\mathcal{S}\cdot m^2)$. 
\end{corollary}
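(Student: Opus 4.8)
The plan is to reuse Phase~1 of \Cref{alg:alg2} as preprocessing and then, instead of enumerating all efficient paths as Phase~2 does, reconstruct just a single path for each nondominated cost vector. First I would run Phase~1 of \Cref{alg:alg2}; by \Cref{lemma:alg2_timephase1} this costs $\mathcal{O}(\mathcal{S}\cdot m^2)$, and, as already noted in the proof of \Cref{theorem:algphase2poly}, it terminates after processing the $m$ edges even when zero-cost edges are present. By \Cref{lemma:effpathlables_alg2} every efficient $(s,z)$-path is represented by a label in some list $\Pi_l$ with $l\in R_z$, so the nondominated labels in $R_z$ realise exactly the nondominated points of $\mathcal{Y}_F$.

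Next I would extract a system of pairwise distinct nondominated cost vectors. Marking the nondominated labels in $R_z$ (as in line~\ref{alg:step:dom_2}) and then grouping them by their cost vector $(a-b,c)$ yields one label per nondominated point; note that two labels sharing the same duration and cost but with different starting times are both nondominated yet carry the same cost vector, which is precisely why the grouping is needed to guarantee \emph{pairwise different} cost vectors. By \Cref{lemma:prop1} there are only $\mathcal{O}(\mathcal{S}\cdot m)$ such cost vectors, and since by \Cref{lemma:space_labels_alg2} the total number of labels is $\mathcal{O}(\mathcal{S}\cdot m)$, this selection can be carried out in $\mathcal{O}(\mathcal{S}\cdot m\log m)$ time by sorting, well within the claimed bound.

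For each selected label I would run a \emph{single-branch} variant of the \textsc{OutputPaths} procedure: instead of recursing on every feasible label in the predecessor list, follow just one feasible predecessor back to $s$. The crucial invariant is that such a predecessor always exists: a label $l=(b,a,c,p,v,r,\Pi)$ is created by pushing a predecessor over an edge of availability time $r$, and that predecessor had arrival time at most $r$; since the representative of $\Pi_p$ has the minimum arrival time among all labels equivalent to $p$, it too has arrival time at most $r$ and thus passes the feasibility test of line~\ref{alg:alg_earliest:step:feasiblepath}. Hence the chain never gets stuck, terminates at the initial label in $R_s$, and produces an $(s,z)$-walk whose cost and arrival time match the selected label and whose starting time equals $b$ (the starting time is propagated unchanged along the chain). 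This yields exactly the target cost vector $(a-b,c)$, which is nondominated, so the walk is efficient. By \Cref{observation:posweightsimple} the walk is already a simple path when all costs are strictly positive; in the general nonnegative case I would excise its zero-cost cycles which, as argued in \Cref{theorem:algphase2poly}, leaves a simple efficient $(s,z)$-path with the same cost vector.

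The main obstacle is the correctness of the single-branch reconstruction: one must ensure that arbitrarily choosing one feasible predecessor never descends into a list that was later discarded by predomination, and that the emitted path is genuinely efficient. This is controlled exactly by \Cref{lemma:domination1}, since predominated labels cannot be prefix-paths of efficient paths, so the lists visited when backtracking from a nondominated $z$-label are never the deleted ones, together with \Cref{lemma:effpathlables_alg2}. For the running time, each reconstruction follows a chain of at most $m$ edges and the subsequent cycle removal is linear, so the $\mathcal{O}(\mathcal{S}\cdot m)$ reconstructions cost $\mathcal{O}(\mathcal{S}\cdot m^2)$ in total; adding Phase~1 and the selection step keeps the overall bound at $\mathcal{O}(\mathcal{S}\cdot m^2)$. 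Maximality then follows because every efficient path has a nondominated cost vector, each of which is already represented in the produced set.
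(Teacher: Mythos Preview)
Your proposal is correct and follows essentially the same approach as the paper: run Phase~1 of \Cref{alg:alg2}, then for each nondominated label in $R_z$ call a truncated output procedure that returns a single walk and strip zero-weight cycles. Your write-up is in fact more careful than the paper's own proof, which does not explicitly mention the grouping step needed to avoid emitting two paths with the same duration and cost but different starting times, nor the invariant guaranteeing that the single-branch backtracking never reaches a deleted list.
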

\begin{proof}
	We use Phase 1 of \Cref{alg:alg2} and calculate the set $\mathcal{N}\subseteq \mathcal{Y}$ of nondominated points in $\mathcal{O}(\mathcal{S}\cdot m^2)$ time.
	Furthermore, we use a modified output procedure, that stops after outputting the first path. We call the procedure for each nondominated label in $R_z$, and if a walk is found we additionally
	remove all zero-weighted cycles. 
	Finding the walk and removing the cycles is possible in linear time, since the length of a walk is bounded by $m$. %
\end{proof}
\section{Min-Cost Earliest Arrival Path Enumeration Problem}\label{section:algproba}
Based on the reduction of \Cref{lemma:levinred} presented in \Cref{section:structuralresults}, we modify \Cref{alg:alg2} to solve \probaenumsp. 
Let $(G=(V,E),s,z)$ be the instance for \probaenumsp{} and $(G'=(V',E'),s',z)$ be the transformed instance in which all paths start at time $0$ at the new source $s'$. %
Although edge $(s,s',0,0,0)$ has costs $0$, because $s'$ has no incoming edges any efficient walk in the transformed instance is simple, i.e., a path.
With all paths starting at $0$, there are the following consequences for the relations between labels defined in \Cref{definition:labeldom2}. First, consider the \emph{equivalence}   
and let $l_1=(0,a_1,c_1,p_1,v,r_1,\Pi_1)$ and $l_2=(0,a_2,c_2,p_2,v,r_2,\Pi_2)$ be two labels at vertex $v$.
Because the starting time of both labels is $0$, the labels are equivalent if $c_1=c_2$.
It follows, that label $l_1$ \emph{predominates} $l_2$ if $a_1 \leq a_2$ and $c_1 < c_2$, hence there is no distinction between domination and predomination.

\Cref{alg:alg_earliest} shows a modified version of \Cref{alg:alg2}, that sets all starting times to $0$. 
The modified algorithm only needs a linear amount of space and less running time for Phase 1.
The reasons for this are, that in line \ref{alg:alg_earliest:step:select} it only needs to find a single label with the minimum costs, instead of the set ${S}$. %
And at each vertex $v$ it only has one representative $l$ in $R_v$ with minimal costs (w.r.t. the other labels in $R_v$), due to the equivalence of labels that have the same costs.
Furthermore, in Phase 2 we do not need to explicitly find the nondominated labels in $R_z$.
Because each label $l$ in $R_z$ has a unique cost value, we consider each represented class $\Pi_l$ and call the output procedure with the labels that have the minimum arrival time in $\Pi_l$.
\begin{algorithm}[thb]\mbox{\hfill}
	\\\textbf{Input:} Graph $\tg$ in edge stream representation, source $s\in V$ and target $z\in V$
	\\\textbf{Output:} All efficient $(s,z)$-paths
	\begin{algorithmic}[1]
		\Statex \textit{Phase 1}
		\State{initialize $R_v$ for each $v\in V$}
		\State{insert label $l_{init}=(0,0,0,0,-,s,\Pi_{l_{init}})$ into $R_s$ and $\Pi_{l_{init}}$}
		\For{each edge $e=(u,v,t_e,\lambda_e,c_e)$} \label{alg:alg_earliest:step:break1}
			\State $l\leftarrow(0,a,c,p,u,\cdot,\cdot) \in R_u$ with $a\leq t_e$ and $c$ minimal\label{alg:alg_earliest:step:select}
			\State $l_{new}\leftarrow(0,t_e+\lambda_e,\,c+c_e,\,l,\,v,\,t_e,\,\Pi)$
			\For{each $l'=(0,a',c',p',v,r',\Pi')\in R_v$}\label{alg:alg_earliest:step:dom}
				\If{$l_{new}$ dominates $l'$} \label{alg:alg_earliest:domcheck1} %
					\State remove $l'$ from $R_v$ and delete $\Pi'$
				\ElsIf {$l'$ dominates $l_{new}$} \label{alg:alg_earliest:domcheck2}
					\State delete $l_{new}$
					\State \textbf{goto} line \ref{alg:alg_earliest:step:break1}
				\ElsIf {$l'$ is equivalent to $l_{new}$} 
					\State set reference $\Pi\leftarrow \Pi'$
					\State insert $p_{new}$ into $\Pi'$ 
					\If{$t_e+\lambda_e<a'$}
						\State replace $l'$ in $R_v$ by $l_{new}$
					\EndIf
		
					\State \textbf{goto} line \ref{alg:alg_earliest:step:break1}				
				\EndIf
			\EndFor
			\State $\Pi\leftarrow \Pi_{l_{new}}$
			\State insert $l_{new}$ into $R_v$ and $u$ into $\Pi_{l_{new}}$
		\EndFor
		\Statex \textit{Phase 2}		
			\For{each label $l'=(0,a,c,p,z,r,\Pi_{l'}) \in R_z$}
				\For {each label $l\in \Pi_{l'}$ with minimal arrival time}
					\State initialize empty path $P$
					\State call \textsc{OutputPaths}($l$, $P$)\label{alg:alg_earliest:step:calloutput}
				\EndFor
			\EndFor
	\end{algorithmic}		
	\caption{for \probaenum}
	\label{alg:alg_earliest}
\end{algorithm}
\begin{theorem}\label{theorem:correct_alg1}
	Algorithm \ref{alg:alg_earliest} outputs exactly all efficient $(s,z)$-paths w.r.t. arrival time and costs.
\end{theorem}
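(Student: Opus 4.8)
The plan is to mirror the correctness proof for \Cref{alg:alg2} (the earlier theorem for \probbenum{}), exploiting the fact that \Cref{alg:alg_earliest} is the same algorithm run on the transformed instance $(G'=(V',E'),s',z)$ of \Cref{lemma:levinred} with the simplification that all starting times are fixed to $0$. By that Levin reduction, a path $P$ in $G$ is efficient for \probaenum{} (w.r.t.\ arrival time and cost) if and only if $g(P)$ is efficient for \probbenum{} in $G'$ (w.r.t.\ duration and cost). Since every start time equals $0$, duration coincides with arrival time, so the notions of domination and predomination collapse into a single relation and equivalence reduces to equality of cost. The strategy is therefore to re-establish the two analogues of \Cref{lemma:effpathlables_alg2} and the main correctness theorem in this collapsed setting, and then conclude via the reduction.

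First I would prove the soundness-of-Phase-1 analogue: for every efficient $(s,z)$-path $P_{s,z}$ and every \prefixp{} $P_{s,w}$, after Phase~1 the set $R_w$ contains a representative label for $P_{s,w}$. The induction over prefix length is identical in structure to \Cref{lemma:effpathlables_alg2}, but simpler: because all labels at a vertex share starting time $0$, the set $S$ of candidate labels from line~\ref{alg:alg_2:step:minset} degenerates to the \emph{single} minimum-cost label $l$ selected in line~\ref{alg:alg_earliest:step:select}, and \Cref{lemma:domination1} (now the plain domination relation) guarantees that the cost-minimal prefix is never discarded. The three cases in the inner loop (lines~\ref{alg:alg_earliest:domcheck1}--\ref{alg:alg_earliest:domcheck2} and the equivalence branch) are checked exactly as before; the case that $l_{new}$ is dominated by some non-equivalent $l'$ is ruled out because it would contradict efficiency of $P_{s,z}$, since replacing the prefix would give a strictly better arrival time at no higher cost.

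Next I would verify Phase~2: that the \textsc{OutputPaths} recursion enumerates \emph{all} efficient $(s,z)$-paths and \emph{only} efficient ones. For completeness, I induct on the length of the \suffixp{} exactly as in the \probbenum{} theorem, using the feasibility test $a_{l'}\le r$ in line~\ref{alg:alg_earliest:step:feasiblepath} to ensure temporal validity; the prefix-representation lemma just proved supplies the needed predecessor label at each step. For soundness, the key observation is that each $l'\in R_z$ now carries a \emph{unique} cost value, so the classes $\Pi_{l'}$ partition the labels at $z$ by cost; calling the output procedure only for labels of \emph{minimal arrival time} within each $\Pi_{l'}$ guarantees that a dominated (later-arriving, equal-cost) label is never expanded, which is precisely what is required for efficiency. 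Because edge $e_0=(s',s,0,0,0)$ is the only zero-cost edge and $s'$ has no incoming edges, \Cref{observation:posweightsimple} still applies to $G'$, so all enumerated walks are in fact simple paths.

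Finally I would close the argument by invoking \Cref{lemma:levinred}: the algorithm, run on $G'$, outputs exactly the efficient paths of \probbenum{} there, and the bijection $g$ (prepending $e_0$) maps these one-to-one onto the efficient paths of \probaenum{} in $G$; stripping $e_0$ yields the claimed output. \textbf{The main obstacle} I anticipate is the soundness direction of Phase~2 — specifically, arguing cleanly that the per-class minimal-arrival-time filtering is exactly the right criterion once domination and predomination have merged, and confirming that no efficient path is inadvertently suppressed when several equivalent prefixes feed into the same representative. This is the step where the collapse of the two relations must be used carefully rather than merely asserted.
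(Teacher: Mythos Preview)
Your proposal is correct and follows essentially the same approach as the paper, though you are considerably more explicit: the paper's proof simply invokes \Cref{lemma:effpathlables_alg2} directly for the modified algorithm and then observes in two sentences that, since all labels in each $\Pi_{l'}$ share the same cost, selecting those with minimal arrival time is exactly the nondominated-filter step. Your detour through the Levin reduction, the re-derivation of the Phase~1 lemma in the collapsed setting, and the explicit suffix-induction for Phase~2 all reproduce the same reasoning more carefully; the only stylistic mismatch is that \Cref{alg:alg_earliest} runs directly on $G$ with starting times forced to $0$ rather than literally on $G'$, so no stripping of $e_0$ is needed in the actual output.
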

\begin{proof}
	\Cref{lemma:effpathlables_alg2} implies that for each efficient path $P_{s,z}$ there is a corresponding representative label in $R_z$ after Phase 1 is finished. 
	For every representative $l'=(0,a,c,p,z,r,\Pi)$ in $R_z$, it holds by construction that all labels in $\Pi_{l'}$ have the same costs. Therefore, we only need to consider the nondominated labels with minimal arrival time $a_{min}$ over all labels in $\Pi_{l'}$.
	Hence, for each $l'\in R_z$ we call the output procedure for every label in $l\in \Pi_{l'}$ if $l$ has minimal arrival time $a_{min}$ in $\Pi_{l'}$. %
\end{proof}
\Cref{alg:alg_earliest} uses a linear number of labels.
\begin{lemma}\label{lemma:space_labels}
	The total number of labels generated and hold at the vertices in \Cref{alg:alg_earliest} is at most $m+1$.
\end{lemma}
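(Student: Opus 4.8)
The plan is to bound the number of labels by the same counting argument used in \Cref{lemma:space_labels_alg2}, but adapted to the simplifications that \Cref{alg:alg_earliest} enjoys. The key structural fact is that, because every path in the transformed instance starts at time $0$, the equivalence relation from \Cref{definition:labeldom2} collapses: two labels at a vertex $v$ are equivalent precisely when they share the same cost $c$. Consequently, each representative in $R_v$ carries a \emph{distinct} cost value, and—unlike in \Cref{alg:alg2}, where line~\ref{alg:alg_2:step:minset} produces a whole set $S$ of up to $\mathcal{S}$ labels with distinct starting times—line~\ref{alg:alg_earliest:step:select} selects a single label $l$ of minimal cost. This is the crucial difference that drops the $\mathcal{S}$ factor.

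With that observation in hand I would argue as follows. First, Phase~1 initializes exactly one label, the initial label $l_{init}$ inserted into $R_s$. Then I would examine the main loop over the $m$ edges of the stream. For each edge $e=(u,v,t_e,\lambda_e,c_e)$, the algorithm selects at most one label $l\in R_u$ (line~\ref{alg:alg_earliest:step:select}) and pushes it over $e$ to form a single new label $l_{new}$. Hence at most one new label is created per edge processed. Summing over all $m$ edges and adding the one initial label yields the bound of $m+1$ labels generated in total, which immediately gives the stated bound on the labels held at the vertices.

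The one point that requires a little care, and which I expect to be the main (though modest) obstacle, is to confirm that no more than a single label is ever \emph{created} at the push step, and that the bookkeeping of equivalent labels does not secretly introduce extra stored labels. Here I would note that when $l_{new}$ turns out to be equivalent to some existing $l'\in R_v$, the algorithm does not create a fresh representative; it merely inserts $l_{new}$ (really its predecessor reference) into the existing list $\Pi'$ and possibly swaps the representative if $l_{new}$ has a smaller arrival time. Either way the count of generated labels increases by exactly one per edge, and the equivalence lists only reorganize labels that have already been counted. Therefore the total number of labels generated and held is at most $m+1$, as claimed.
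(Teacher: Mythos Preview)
Your argument is correct and matches the paper's own proof: one initial label plus at most one new label $l_{new}$ created per edge in the stream gives the bound $m+1$. The additional remarks about the collapsed equivalence relation and the bookkeeping of $\Pi'$ are accurate but go beyond what the paper records; the core counting is identical.
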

\begin{proof}
	We need one initial label $l_{init}$ at the source vertex $s$.
	For each incoming edge $e=(u,v,t,\lambda,c)$ in the edge stream, in line \ref{alg:alg_earliest:step:break1} we choose the label $l$ with minimal costs and arrival time at most $t$.
	We only push $l$ and generate at most one new label $l_{new}$ at vertex $v$.
	Therefore, we generate at most $m+1$ labels in total.	
\end{proof}
\begin{lemma}\label{lemma:timephase1}
	Phase 1 of \Cref{alg:alg_earliest} has a time complexity of $\mathcal{O}(m^2)$.
\end{lemma}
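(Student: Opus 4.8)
The plan is to analyze the running time of Phase 1 of \Cref{alg:alg_earliest} by bounding the work done per edge and multiplying by the number of edges. The outer loop (line \ref{alg:alg_earliest:step:break1}) iterates over all $m$ edges in the stream, so it suffices to show that processing a single edge $e=(u,v,t_e,\lambda_e,c_e)$ takes $\mathcal{O}(m)$ time; the claimed bound of $\mathcal{O}(m^2)$ then follows immediately.

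For a single edge, the work splits into three parts. First, in line \ref{alg:alg_earliest:step:select} the algorithm must select the single label $l\in R_u$ with minimal cost among those with arrival time at most $t_e$. The key observation here is that, by \Cref{lemma:space_labels}, the total number of labels over all vertices is at most $m+1$, so in particular $|R_u|\in\mathcal{O}(m)$; hence a linear scan over $R_u$ finds the desired label in $\mathcal{O}(m)$ time. Second, we push $l$ to create $l_{new}$, which is a constant-time operation. Third, the inner loop (line \ref{alg:alg_earliest:step:dom}) compares $l_{new}$ against every label $l'\in R_v$, checking domination and equivalence; each comparison is $\mathcal{O}(1)$, and again since $|R_v|\in\mathcal{O}(m)$ the whole inner loop costs $\mathcal{O}(m)$. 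The bookkeeping that may occur inside the loop — removing a predominated $l'$ and deleting $\Pi'$, or inserting into an equivalence list and possibly replacing a representative — should be argued to be charged appropriately so as not to inflate the per-edge cost beyond $\mathcal{O}(m)$.

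I would therefore structure the proof as: (1) state that the outer loop runs $m$ times; (2) invoke \Cref{lemma:space_labels} to bound $|R_u|$ and $|R_v|$ by $\mathcal{O}(m)$; (3) conclude that the label selection in line \ref{alg:alg_earliest:step:select} and the inner domination loop each take $\mathcal{O}(m)$ time per edge; and (4) multiply to obtain $m\cdot\mathcal{O}(m)=\mathcal{O}(m^2)$. This mirrors the structure of \Cref{lemma:alg2_timephase1}, but drops a factor of $\mathcal{S}$ because here we need only a single minimal-cost label rather than the full set $S$ of size up to $\mathcal{S}$, and because the equivalence-by-cost collapsing keeps each $R_v$ to one representative per distinct cost value.

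The main subtlety — and the step I would be most careful about — is the amortized cost of the deletion operations inside the inner loop. When $l_{new}$ predominates some $l'$, the algorithm removes $l'$ and deletes the list $\Pi'$. A naive worst-case count of list deletions could in principle appear superlinear, so I would note that deleting $\Pi'$ need not be counted at full size per edge: each label is created once (at most $m+1$ of them by \Cref{lemma:space_labels}) and deleted at most once, so the total deletion work across the entire run of Phase 1 is $\mathcal{O}(m)$ and does not affect the per-edge $\mathcal{O}(m)$ bound. Apart from this amortization argument, the proof is a routine counting of loop iterations, and the dominant cost is the $\mathcal{O}(m)$ scan of the representative sets repeated over $m$ edges.
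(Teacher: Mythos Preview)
Your proof is correct and follows the same overall structure as the paper: bound the per-edge work by $\mathcal{O}(m)$ and multiply by the $m$ iterations of the outer loop. The one substantive difference is in line~\ref{alg:alg_earliest:step:select}. You bound the selection of the minimal-cost feasible label by an $\mathcal{O}(m)$ scan of $R_u$, whereas the paper argues this step is $\mathcal{O}(1)$ by exploiting that the representative of each equivalence class is maintained to have earliest arrival time, so one need only test whether the current minimum-cost representative is feasible. Either argument suffices for the $\mathcal{O}(m^2)$ bound; yours is more conservative but also more self-contained, since the paper's constant-time claim implicitly assumes an additional invariant (that the minimum-cost representative, once feasible, stays feasible as $t_e$ increases) that it does not spell out. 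Your amortization remark about list deletions is a point the paper omits entirely; it is a welcome clarification, though not strictly needed for the stated bound.
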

\begin{proof}
	The outer loop iterates over $m$ edges. In each iteration we have to find the representative label $l\in R_u$ with minimum costs and arrival time $a\leq t_e$.
	This is possible in constant time, since we always keep the label with the earliest arrival time of each equivalence class as representative in $R_u$. 
	We only need to check if the arrival time of this label is earlier than the availability time of the current edge.
	Next we have to check the domination and equivalence between $l_{new}$ and each label $l'\in R_v$. Each of the cases takes constant time, and there are $\mathcal{O}(m)$ labels in $R_v$.
	Altogether, a time complexity of $\mathcal{O}(m^2)$ follows. 
\end{proof}
\Cref{alg:alg_earliest} lists all efficient paths in polynomial delay and uses only linear space.
\begin{theorem}\label{theorem:ocearl1}
	\probaenumsp{} $\in$ \psdelayp.
\end{theorem}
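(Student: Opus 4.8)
The plan is to mirror the argument already used for \probbenumsp{}, but to exploit the simplifications that arise once every path starts at time $0$. Three ingredients are needed: correctness, polynomial delay, and polynomial (here in fact linear) space. Correctness comes for free from \Cref{theorem:correct_alg1}, which guarantees that \Cref{alg:alg_earliest} outputs exactly the efficient $(s,z)$-paths w.r.t.\ arrival time and costs; so the remaining work is to bound the time delay and the space.

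For the space bound I would invoke \Cref{lemma:space_labels}: Phase~1 creates at most $m+1$ labels, so storing all representatives together with their equivalence lists costs $\mathcal{O}(m)$. The only additional space is consumed by Phase~2, where the output procedure recurses while building a single path $P$; since every temporal path has length at most $m$, both the recursion stack and the partial path $P$ need $\mathcal{O}(m)$ space. Hence the total space is linear in the input size, and in particular polynomially bounded.

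For the delay I would first observe that Phase~1 terminates in $\mathcal{O}(m^2)$ time by \Cref{lemma:timephase1}, which bounds the time until the first path can be emitted. For the delay between two consecutive outputs, the key point is that the Phase~2 backtracking never runs into a dead end: each label $l$ was created by pushing, over its incoming edge, a temporally feasible label of its predecessor's equivalence class $\Pi_p$, so the loop testing $a_{l'}\le r$ in the output procedure always finds at least one child, and every root-to-leaf branch of the recursion therefore descends all the way to the source label $l_{init}$ and emits a complete path. Consequently the recursion tree has the property that every leaf is a solution, its depth is at most $m$, and at each node we inspect a list $\Pi_p$ whose total size over all vertices is $\mathcal{O}(m)$. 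A standard depth-first backtracking argument then bounds the work between two consecutive outputs by $\mathcal{O}(m^2)$, placing the problem in \delayp{}. Combined with the linear space bound this gives \probaenumsp{}$\in$\psdelayp{}.

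The main obstacle I anticipate is precisely this no-dead-end property underpinning the delay bound: one has to argue carefully that whenever the procedure recurses over $\Pi_p$ at least one label survives the feasibility filter $a_{l'}\le r$, and that following any surviving label eventually reaches $l_{init}$. This follows from the construction analysed in \Cref{lemma:effpathlables_alg2} (each representative of $\Pi_p$ has minimal arrival time in its class, and the edge that created $l$ was pushed from such a feasible label), but it is the step that must be spelled out in detail to rule out super-polynomial delay caused by fruitless branches of the backtracking.
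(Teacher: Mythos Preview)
Your proposal is correct and follows essentially the same approach as the paper, which in fact states this theorem without an explicit proof, relying on the preceding \Cref{theorem:correct_alg1}, \Cref{lemma:space_labels}, and \Cref{lemma:timephase1} together with the template already used for \probbenumsp{}. Your treatment is, if anything, more careful: the paper's corresponding proof for \probbenumsp{} simply asserts that the output procedure ``outputs at least one path'' per call, whereas you explicitly identify and justify the no-dead-end property of the backtracking, which is indeed the substantive point behind the polynomial-delay claim.
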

Using \Cref{alg:alg_earliest}, also the results of \Cref{theorem:algphase2poly} and \Cref{corollary:noneffpoly} %
can be adapted for the earliest arrival case.

\begin{theorem}\label{theorem:algphase2poly2}
	Given a weighted temporal graph $G=(V,E)$, $s,z\in V$ and
	\begin{enumerate}
		\item an $(s,z)$-path $P$, deciding if $P$ is efficient for \probaenum{}, or
		\item $c\in \mathbb{R}_{\geq 0}$ and $a\in \mathbb{N}$, deciding if there exists an $(s,z)$-path $P$ with $a(P)\leq a$ and $c(P)\leq c$
		is possible in polynomial time. 
	\end{enumerate}
\end{theorem}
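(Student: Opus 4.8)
The plan is to mirror the proof of \Cref{theorem:algphase2poly}, replacing \Cref{alg:alg2} with \Cref{alg:alg_earliest} throughout. First I would run Phase 1 of \Cref{alg:alg_earliest} and compute the set $\mathcal{N}\subseteq \mathcal{Y}_A$ of nondominated points at $z$. As in the fastest-path case, allowing cost-$0$ edges means Phase 1 may produce labels corresponding to walks with zero-weighted cycles rather than simple paths; however, Phase 1 still terminates after processing the $m$ edges, and every efficient walk $W$ has a simple efficient path $Q$ (obtained by deleting its zero-weighted cycles) with the identical cost vector $(c(W),a(W))$. Hence the nondominated points at $z$ are unaffected by the presence of such walks, and $\mathcal{N}$ is exactly the Pareto frontier for \probaenum{}.

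For part \emph{1.}, given the candidate path $P$ I would compute its cost vector $(c(P),a(P))$ directly from its edges, and then check membership $(c(P),a(P))\in\mathcal{N}$: the path $P$ is efficient for \probaenum{} if and only if its cost vector is nondominated, which by \Cref{lemma:prop1} holds exactly when it lies in $\mathcal{N}$. For part \emph{2.}, I would simply scan $\mathcal{N}$ and test whether some $(a',c')\in\mathcal{N}$ satisfies $a'\leq a$ and $c'\leq c$; such a point exists iff there is an $(s,z)$-path (or walk, equivalently) meeting both bounds, since any feasible path is either nondominated or dominated by some point of $\mathcal{N}$.

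It remains to argue the running time is polynomial. By \Cref{lemma:timephase1}, Phase 1 of \Cref{alg:alg_earliest} runs in $\mathcal{O}(m^2)$ time, and by \Cref{lemma:prop1} the frontier $\mathcal{N}$ has size $\mathcal{O}(m)$, so it can be extracted from $R_z$ and scanned in polynomial time. Computing $(c(P),a(P))$ for the given $P$ takes linear time. Thus both decision problems are solvable in polynomial time.

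Since the structure of this argument is essentially identical to that of \Cref{theorem:algphase2poly}, I do not expect a genuine obstacle here; the only point that needs care is the same one as in the fastest-path case, namely verifying that the possible presence of zero-weighted-cycle walks does not corrupt $\mathcal{N}$. This follows because deleting zero-weighted cycles preserves both the cost and the arrival time, so the set of nondominated cost vectors coincides for walks and for simple paths.
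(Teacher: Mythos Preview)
Your proposal is correct and follows exactly the approach the paper intends: the paper does not give an explicit proof of this theorem but simply states that the argument of \Cref{theorem:algphase2poly} carries over using \Cref{alg:alg_earliest} in place of \Cref{alg:alg2}, which is precisely what you do. Your handling of the zero-cost-edge issue and your citations of \Cref{lemma:timephase1} and \Cref{lemma:prop1} for the running-time bound match the paper's reasoning in the fastest-path case.
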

\begin{corollary}\label{corollary:noneffpoly2}
	Given a temporal graph $G=(V,E)$ and $s,z\in V$, a 
	maximal set of efficient $(s,z)$-paths with pairwise different cost vectors for \probaenum{}
	can be found in time $\mathcal{O}(m^2)$. 
\end{corollary}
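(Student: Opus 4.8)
The final statement is \Cref{corollary:noneffpoly2}, which claims that for \probaenum{} a maximal set of efficient $(s,z)$-paths with pairwise different cost vectors can be found in time $\mathcal{O}(m^2)$.

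The plan is to mirror the proof of \Cref{corollary:noneffpoly}, which established the analogous result for \probbenum{} in time $\mathcal{O}(\mathcal{S}\cdot m^2)$, but now exploiting the improved bounds that \Cref{alg:alg_earliest} provides for the earliest-arrival setting. The key observation is that \emph{a maximal set of efficient paths with pairwise different cost vectors} is in one-to-one correspondence with the set of nondominated points $\mathcal{N}\subseteq \mathcal{Y}_A$: for each nondominated point we need to output exactly one representative path realizing that cost vector, and by \Cref{lemma:prop1} there are only $\mathcal{O}(m)$ such points for \probaenum{}.

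First I would run Phase~1 of \Cref{alg:alg_earliest}, which by \Cref{lemma:timephase1} computes the representatives $R_z$ (and hence the nondominated set $\mathcal{N}$) in $\mathcal{O}(m^2)$ time. Since every label in $R_z$ already has a unique cost value by construction (as noted in the discussion preceding \Cref{theorem:correct_alg1}, with all starting times fixed to $0$ the labels are equivalent exactly when their costs coincide), each representative corresponds to a distinct nondominated cost vector; there is no need for a separate marking step as in \probbenum{}. Next, I would invoke a modified output procedure that halts after emitting the first path, calling it once for each representative in $R_z$ (using the label of minimal arrival time in its equivalence list). This yields one efficient path per nondominated cost vector. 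If the recovered object is a walk rather than a simple path---which can occur here because the prepended edge $e_0=(s',s,0,0,0)$ has cost $0$---I would strip out the zero-weighted cycles, exactly as in \Cref{corollary:noneffpoly}; cost and arrival time are preserved, so the result is still efficient and realizes the same cost vector.

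For the running time, Phase~1 dominates at $\mathcal{O}(m^2)$ by \Cref{lemma:timephase1}. There are $|\mathcal{N}|\in\mathcal{O}(m)$ representatives by \Cref{lemma:prop1}, and for each the truncated output procedure traverses at most $\mathcal{O}(m)$ labels (the length of any path is bounded by $m$) and the subsequent cycle removal is linear in the walk length, again $\mathcal{O}(m)$; summing over all representatives gives $\mathcal{O}(m^2)$ for Phase~2. The total is therefore $\mathcal{O}(m^2)$, as claimed. The main point requiring care---and the step I would treat as the principal obstacle---is justifying that truncating the output procedure after the first path still yields a \emph{valid} efficient path for each representative and that stripping zero-weight cycles does not change the cost vector; this relies on the correctness argument of \Cref{theorem:correct_alg1} together with \Cref{observation:posweightsimple}-style reasoning adapted to the single zero-cost edge $e_0$. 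Since these facts were already worked out for \probbenum{} in \Cref{corollary:noneffpoly}, the proof is essentially a transcription that substitutes the sharper $\mathcal{O}(m^2)$ Phase~1 bound of \Cref{lemma:timephase1}.
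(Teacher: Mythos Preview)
Your proposal is correct and follows essentially the same approach as the paper, which does not give an explicit proof but simply states that the argument of \Cref{corollary:noneffpoly} adapts to the earliest-arrival setting via \Cref{alg:alg_earliest}; you have spelled out precisely that adaptation, invoking \Cref{lemma:timephase1} for the $\mathcal{O}(m^2)$ Phase~1 bound and \Cref{lemma:prop1} for the $\mathcal{O}(m)$ bound on nondominated points.

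One small inaccuracy worth correcting: you attribute the possibility of obtaining a walk (rather than a simple path) to the artificial edge $e_0=(s',s,0,0,0)$ from the Levin reduction. But \Cref{alg:alg_earliest} does not actually insert $e_0$; it runs directly on the input graph with all starting times forced to~$0$. The reason walks can arise is instead that \Cref{corollary:noneffpoly2}, like \Cref{corollary:noneffpoly}, is stated for general nonnegative edge costs, so the \emph{input} graph itself may contain zero-cost edges and hence zero-cost cycles. Your cycle-removal step is still the right fix and for the right reason (cost and arrival time are unchanged), just with a different justification for why it is needed.
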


\section{Complexity of Counting Efficient Paths}\label{section:counting}
In this section, we discuss the complexity of counting efficient paths and show that 
the counting versions \probbcount{} and \probacount{} are both \spcomplete.
First, we show that already the unweighted earliest arrival temporal path problem is \spcomplete.
\\\\
\textsc{earliest Arrival $(s,z)$-Paths Counting Problem} (\tpeac)
\\\textbf{Input:} A temporal graph $\tg=(V,\tge)$ and $s,z\in V$.
\\\textbf{Output:} Number of fastest $(s,z)$-paths?
\begin{lemma}
	\tpeac{} is \emph{\spcomplete}.
\end{lemma}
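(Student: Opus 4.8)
The statement to prove is that \tpeac{} (counting earliest-arrival temporal paths in an unweighted temporal graph) is \spcomplete. The plan has two parts: membership in \spc{} and \spc-hardness via reduction from \textsc{ST-Path}.

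The plan is as follows. For membership in \spc, I would exhibit a nondeterministic polynomial-time Turing machine whose accepting computation paths are in bijection with the earliest-arrival $(s,z)$-paths. First compute the earliest arrival time $a^*$ at $z$; by \Cref{lemma:prop1} (or the linear-time earliest-arrival algorithm of Wu et al.) this is polynomial. Then the machine guesses a sequence of temporal edges forming an $(s,z)$-path, verifies in polynomial time that it is a valid temporal path (consecutive availability/traversal-time constraints and simplicity), and that its arrival time equals $a^*$, accepting iff all checks pass. Since a path has at most $n-1$ edges, the certificate is polynomially bounded, so \tpeac{} $\in$ \spc.

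For hardness I would give a polynomial (parsimonious, if possible) reduction from \textsc{ST-Path}, which Valiant~\cite{valiant1979complexityenum} proved \spcomplete. Given a static directed graph $\tg=(V,E)$ with source $s$ and target $t$, the key difficulty is that static $(s,t)$-paths can have arbitrarily different lengths, whereas earliest-arrival temporal paths must all share a single arrival time. The idea is to force every $(s,t)$-path in the static graph to correspond to a temporal $(s,z)$-path of identical length and arrival time, so that every static path becomes earliest-arrival. I would assign time stamps layer by layer: create a ``time-expanded'' construction in which a copy of each vertex is made available at each of the $n$ possible path-positions, route each static edge $(u,v)$ as a temporal edge from the copy of $u$ at level $i$ to the copy of $v$ at level $i+1$ (for every $i$), and append padding edges that let any path which reaches $t$ early wait or be extended so that it arrives at a dedicated target $z$ at exactly time $n$. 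Concretely, I would add for the target $t$ a chain of unit edges $t \to z$ that synchronizes all arrivals, and self-advancing edges so a path arriving at $t$ at level $i<n$ can still reach $z$ at the common final time.

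The main obstacle, and the step deserving the most care, is ensuring the correspondence is a bijection: every earliest-arrival temporal $(s,z)$-path must project to a distinct simple static $(s,t)$-path and vice versa, with no spurious paths introduced by the padding gadget and no collisions between static paths of different lengths. I would verify that (i) the level structure forces temporal paths to be monotone in levels, so they are automatically simple and cannot revisit a static vertex, (ii) the arrival time at $z$ equals the fixed value $n$ exactly when the path genuinely corresponds to an $s$-$t$ path, making all such paths earliest-arrival simultaneously, and (iii) the padding edges do not merge two distinct static paths into one temporal path or split one into several. Once the bijection is established, the number of earliest-arrival $(s,z)$-paths equals the number of static $(s,t)$-paths, giving the parsimonious reduction and hence \spc-hardness; combined with membership, \tpeac{} is \spcomplete.
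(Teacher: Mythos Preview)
Your membership argument is fine and essentially what the paper does. The hardness argument, however, has a genuine gap: the bijection you assert in step~(iii) does not hold under either reading of your construction.

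If you literally create copies $v^{(i)}$ of each vertex at each level and route $(u,v)\in E$ as a temporal edge $(u^{(i)},v^{(i+1)},i,1)$, then a temporal $(s^{(0)},z)$-path is simple in the expanded graph but its projection to $V$ is only a \emph{walk}: nothing prevents $s^{(0)}\to u^{(1)}\to s^{(2)}\to\cdots$ when $(u,s)\in E$. So claim~(i), that the level structure prevents revisiting a static vertex, is false, and you would be counting $(s,t)$-walks rather than $(s,t)$-paths. If instead you keep the vertex set $V$ and place a temporal copy of each static edge at every time step $1,\dots,n-1$, then temporal paths are simple by definition, but now a single static $(s,t)$-path of length $\ell$ corresponds to $\binom{n-1}{\ell}$ distinct temporal paths, one for each choice of the $\ell$ time steps at which its edges are traversed. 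Your padding gadget can equalize arrival times at $z$, but it cannot collapse these $\binom{n-1}{\ell}$ realizations back into one, so the reduction is not parsimonious and the count is wrong.

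The paper sidesteps this by abandoning parsimony and giving a polynomial-time \emph{Turing} reduction. It builds, for each $\tau=1,\dots,n-1$, the temporal graph $G_\tau$ with edge copies at times $1,\dots,\tau$ and a single edge $(t,z,n,1)$, queries the oracle for $y_\tau$, the number of earliest-arrival $(s,z)$-paths in $G_\tau$, and then recovers the number $x_\tau$ of static $(s,t)$-paths of length exactly $\tau$ via the recurrence $x_\tau=y_\tau-\sum_{i=1}^{\tau-1}\binom{\tau}{i}x_i$, which precisely cancels the overcounting from waiting. Summing the $x_\tau$ gives the number of static $(s,t)$-paths. If you want a one-shot reduction, the cleanest fix is to reduce from a problem where all paths have the \emph{same} length (e.g.\ counting Hamiltonian $(s,t)$-paths, also \spcomplete), so that the multiplicity factor is $1$; your current plan from \textsc{ST-Path} cannot be made parsimonious with the gadget you describe.
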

\begin{proof}
	We provide a polynomial time Turing reduction from \textsc{ST-Path} to \tpeac. 
	\tpeac{} is the special case of \probacount{} where all edge costs are $0$.
    It is possible to decide if a path $P$ is efficient for \probas{} in polynomial time.
	This implies, that \tpeac{} is in \spc.
	Input of \textsc{ST-Path} is a static, directed graph $G=(V,E)$ and two vertices $s,t\in V$. The output is the number of simple paths from $s$ to $t$.
	Valiant~\cite{valiant1979complexityenum} showed that the problem is \spcomplete.
	Given a static, directed graph $G=(V,E)$ with $n\coloneqq|V|$ and two vertices $s,t\in V$, we iteratively construct $n-1$ temporal graphs $G_\tau=(V\cup\{z\}, E_\tau)$ with temporal edges 
	\[E_\tau\coloneqq \{ (u,v,i,1)\mid (u,v)\in E, 1\leq i\leq \tau\}\cup \{(t,z,n,1)\}\text{,}\] 
	for $\tau \in \{1,\ldots, n-1\}$. \Cref{fig:example_reduction_sp} shows an example of the construction. 
	Each temporal $(s,z)$-path in $G_\tau$ ends with edge $(t,z,n,1)$.
	This construction allows to determine all temporal $(s,z)$-paths of lengths $2\leq\ell\leq \tau+1$ in $G_\tau$, with $\tau+1$ being the maximal length of any path in $G_\tau$.
	For each $\tau \in \{1,\ldots, n-1\}$ let $y_\tau$ be the total number of earliest arrival $(s,z)$-paths in $G_\tau$.
	Note that $y_\tau$ equals the number of $(s,z)$-paths that use at most $\tau+1$ edges, because each edge traversal takes one time step.
	Each temporal $(s,z)$-path in $G_\tau$ corresponds to exactly one $(s,t)$-path in the static graph $G$ consisting of the same sequence of edges but the last edge. 
	Let $x_\tau$ be the number of temporal $(s,z)$-paths that use exactly $\tau+1$ edges. %
	Then it holds that the number of $(s,t)$-paths in $G$ equals $\sum_{\tau=1}^{n-1} x_\tau$.
	
	A path consisting of exactly $\tau+1$ edges in $G_\tau$ does not have waiting time at any vertex, beside possibly at vertex $t$. 
	But a path of length $\ell \leq \tau$ can wait for $\tau+1-\ell$ time steps at one or more vertices that it visits beside vertex $t$.
	In order to derive the value $x_\tau$ from $y_\tau$ we have to account for all paths of length $\ell \leq \tau$ that have such waiting times.
	Consider a path of length $\ell$ in $G_\tau$ with $\ell\leq\tau$. From the $\tau$ time steps, we choose $\ell$ time steps in which we do not wait and traverse an edge.
	Consequently, knowing that there exist $x_\ell$ paths of length $\ell$ in $G_\ell$, it follows that there are $x_\ell\cdot{\tau \choose \ell}$ paths of length $\ell$ with waiting times in $G_\tau$. %
	Based on these observations, the \Cref{alg:reduction} calculates $x_1,\ldots,x_{n-1}$, given an oracle for \tpeac{} in polynomial time.
\begin{algorithm}\mbox{\hfill}
	\\\textbf{Input:} Graph $G=(V,E)$, source $s\in V$ and target $t\in V$
	\\\textbf{Output:} Number of efficient $(s,t)$-paths
	\begin{algorithmic}[1]
	\For {$\tau=1,\ldots,n-1$}
		\State  Construct $G_\tau$
		\State  $y_\tau\coloneqq \#$Earliest arrival $(s,z)$-path in $G_\tau$ %
		\State  $x_\tau\coloneqq y_\tau - \displaystyle\sum_{i=1}^{\tau-1}{\tau \choose i}\cdot x_i$
	\EndFor
	\Return $\sum_{i=1}^{n-1}x_i$
	\end{algorithmic}		
	\caption{}
	\label{alg:reduction}
\end{algorithm}
	The algorithm iteratively determines the number $x_\tau$ of $(s,z)$-path with length $\tau+1$ for $\tau \in \{1,\ldots,n-1\}$. Each of these paths corresponds to exactly one $(s,t)$-path of length $\tau$ in $G$. Summing up all $x_i$ for $1\leq i\leq n-1$ leads exactly to the number of $(s,t)$-path in $G$.
\end{proof}
Now consider the \textsc{Fastest $(s,z)$-Path Counting Problem} (\tpfc{}). 
Both problems \tpeac{} and \tpfc{} are special case of the weighted versions \probacount{} and \probbcount{}. 
The parsimonious reduction of \Cref{lemma:levinred} implies that also \tpfc{} is {\spcomplete}.
Consequently, it follows \Cref{theorem:counting}.
\begin{theorem}\label{theorem:counting}
	\probbcount{} and \probacount{} are \emph{\spcomplete}. 
\end{theorem}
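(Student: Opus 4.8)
The plan is to establish both membership in \spc{} and \spc-hardness for each of \probacount{} and \probbcount{}, routing the hardness through the two unweighted special cases \tpeac{} and \tpfc{}. Since \tpeac{} has already been shown \spcomplete{} and \tpfc{} follows from it via \Cref{lemma:levinred}, the work is to (i) verify membership and (ii) argue that these special cases embed into the weighted counting problems while preserving the count.

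First, membership. I would invoke \Cref{theorem:algphase2poly} and \Cref{theorem:algphase2poly2}, which state that deciding whether a given $(s,z)$-path is efficient (for \probbenum{} and \probaenum{}, respectively) is in \textbf{P}. From this I build a polynomial-time verifier $M$ that, on an instance $x$ together with a certificate $y$ of fixed polynomial length, interprets $y$ under a canonical encoding of $(s,z)$-paths as an ordered list of at most $n-1$ edges, and accepts iff $y$ encodes a valid simple $(s,z)$-path that is efficient. Because the canonical encoding is injective on paths, the number of accepting certificates $|\{y : M(x,y)=1\}|$ equals exactly the number of efficient $(s,z)$-paths, so both \probacount{} and \probbcount{} lie in \spc{}.

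Second, hardness. For \probacount{}, I observe that \tpeac{} is precisely the restriction of \probacount{} to instances with all edge costs equal to $0$: when every cost is $0$, all paths have equal cost, so a path is efficient w.r.t.\ arrival time and cost iff it has minimum arrival time, i.e.\ iff it is an earliest-arrival path. This restriction is an identity map on instances, hence a parsimonious reduction, and the \spcompleteness{} of \tpeac{} transfers, making \probacount{} \spc-hard. For \probbcount{}, I would first lift hardness to \tpfc{}: applying the construction of \Cref{lemma:levinred} to an unweighted earliest-arrival instance (adding source $s'$ and the zero-cost edge $e_0$), the bijection $g$ there restricts to a bijection between efficient earliest-arrival paths and efficient fastest paths, which is exactly a count-preserving reduction from \tpeac{} to \tpfc{}; together with the obvious membership this makes \tpfc{} \spcomplete{}. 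Finally, \tpfc{} is the all-zero-cost restriction of \probbcount{} (all costs equal, so efficiency collapses to minimum duration), giving \spc-hardness of \probbcount{}. Combining with membership yields \spcompleteness{} of both problems.

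The main obstacle I anticipate is not any single computation but the careful bookkeeping that every reduction used is genuinely parsimonious. The delicate point is confirming that the bijection $g$ of \Cref{lemma:levinred} maps efficient paths to efficient paths and non-efficient to non-efficient, with no merging or splitting of solutions, so that counts are preserved exactly; this is already the content of that lemma, but I would re-examine it in the degenerate unweighted setting to be sure that the extra vertex $s'$ and the zero-cost prefix edge $e_0$ introduce no spurious efficient paths. The membership argument is routine once \Cref{theorem:algphase2poly} and \Cref{theorem:algphase2poly2} are granted, the only care being to fix a canonical, injective encoding so that each efficient path is counted exactly once.
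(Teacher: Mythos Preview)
Your proposal is correct and follows essentially the same route as the paper: membership via the polynomial-time efficiency check (\Cref{theorem:algphase2poly}, \Cref{theorem:algphase2poly2}), hardness of \probacount{} from \tpeac{} as the all-zero-cost special case, and hardness of \probbcount{} by first transporting \tpeac{} to \tpfc{} through the parsimonious (Levin) reduction of \Cref{lemma:levinred} and then viewing \tpfc{} as the all-zero-cost special case. Your write-up is in fact more explicit than the paper's, which dispatches the theorem in two sentences; your added care about injective certificate encodings and about checking that the bijection $g$ of \Cref{lemma:levinred} truly preserves counts in the unweighted setting is appropriate and does not deviate from the intended argument.
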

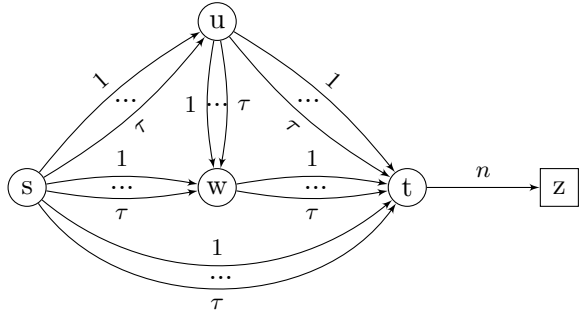
\begin{figure}
	\centering
	\begin{tikzpicture}
	\node[vertex] (u) at (-1.5,1.2) {u};
	\node[vertex] (w) at (-1.5,-1) {w};
	\node[vertex] (t) at (1,-1) {t};
	\node[vertex] (s) at (-4,-1) {s};
	\node[vertexb] (z) at (3.0,-1) {z};			
	\node at (-1.45,-2.2) {$...$};
	\node at (-2.75,-1) {$...$};
	\node at (-0.2,-1.0) {$...$};
	\node at (-1.5,0.1) {$...$};			
	\node at (-0.3,0.15) {$...$};	
	\node at (-2.7,0.15) {$...$};	
	
	\path[->] 
	(t)  edge[dedge]  node[midway,above,sloped] {$n$}   (z)
	(s)  edge[dedge, bend right=55]  node[midway,below,sloped] {$\tau$}   (t)
	(s)  edge[dedge, bend right=40]  node[midway,above,sloped] {$1$}   (t)	
	(u)  edge[dedge, bend left=10]  node[right] {$\tau$}   (w)
	(u)  edge[dedge, bend right=10]  node[left] {$1$}   (w)	
	(u)  edge[dedge, bend left=10]  node[midway,above,sloped] {$1$}   (t)	
	(u)  edge[dedge, bend right=10]  node[midway,below,sloped] {$\tau$}   (t)	
	(s)  edge[dedge, bend left=10]  node[midway,above,sloped] {$1$}   (w)	
	(s)  edge[dedge, bend right=10]  node[midway,below,sloped] {$\tau$}   (w)		
	(w)  edge[dedge, bend left=10]  node[midway,above,sloped] {$1$}   (t)
	(w)  edge[dedge, bend right=10]  node[midway,below,sloped] {$\tau$}   (t)	
	(s)  edge[dedge, bend left=10]  node[midway,above,sloped] {$1$}   (u)
	(s)  edge[dedge, bend right=10]  node[midway,below,sloped] {$\tau$}   (u);
	\end{tikzpicture}
	\caption{Example for the reduction from \textsc{ST-Path} to \tpeac{}. The instance of \textsc{ST-Path} consists of a graph $G=(V,E)$ with vertices $V=\{s,u,w,t\}$ and edges $E=\{(s,u),(s,w),(s,t),(u,w),(u,t),(w,t)\}$. In iteration $\tau$ of the reduction we use the oracle for \tpeac{} to find all $(s',z)$-path in the modified graph.} %
	\label{fig:example_reduction_sp}
\end{figure}
\section{Conclusion and Open Problems}\label{section:conclusion}
We discussed the bicriteria optimization problems 
\textsc{Min-Cost Earliest Arrival Paths Enumeration Problem} (\probaenum)
and
\textsc{Min-Cost Fastest Paths Enumeration} (\probbenum).
We have shown that enumerating exactly all efficient paths with low costs and early arrival time or short duration is possible in polynomial time delay and linear or polynomial space if the input graph has strictly positive edge costs. 
In case of nonnegative edge costs, it is possible to determine a maximal set of efficient paths with pairwise different cost vectors 
in $\mathcal{O}(m^2)$ time or $\mathcal{O}(\mathcal{S}\cdot m^2)$ time, respectively, where $\mathcal{S}$ is the number of distinct availability times of edges leaving the source vertex $s$.
We can find an efficient path for each nondominated point in polynomial time.

For the cases of zero-weighted or even negative edge weights, we cannot guarantee polynomial time delay for our algorithms to solve \probbenum{} or \probaenum{}. However, the proposed algorithms can be used to determine all efficient $(s,z)$-walks in polynomial time delay. Because each edge in a temporal graph can only be used for departure at a certain time, the number of different walks is finite and the algorithms terminate. So far, we are not aware of a way to ensure that only simple paths are enumerated without loosing the property that the delay between the output of two paths stays polynomially bounded.

Counting temporal paths is hard even in the unweighted single criterion case. We showed \spcompleteness{} for all discussed counting problems.
However, we are not aware of a reduction of a \spcomplete{} problem to the special case in which only strictly positive edge costs are allowed. %


\end{document}